\documentclass[11pt]{article}


%

\usepackage{booktabs} 
\usepackage[T1]{fontenc}
\usepackage[latin9]{inputenc}
\usepackage[letterpaper]{geometry}
\geometry{verbose,tmargin=1in,bmargin=1in,lmargin=1in,rmargin=1in}
\usepackage{amsmath}
\usepackage{amsthm}
\usepackage{amssymb}



\makeatletter

\newtheorem{theorem}{Theorem}[section]
\newtheorem{corollary}[theorem]{Corollary}
\newtheorem{lemma}[theorem]{Lemma}
\newtheorem{proposition}[theorem]{Proposition}
\newtheorem{claim}[theorem]{Claim}
\newtheorem{definition}[theorem]{Definition}





\usepackage{epigraph}

\setlength{\epigraphwidth}{16cm}
\setlength{\epigraphrule}{0pt}

\usepackage{etoolbox}

\patchcmd{\epigraph}{\@epitext{#1}}{\itshape\@epitext{#1}}{}{}

\makeatother

\newcommand{\hp}{high priced}

\begin{document}

\title{Revenue Loss in Shrinking Markets}

\author{Shahar Dobzinski \and Nitzan Uziely}
%
%



\maketitle
\begin{abstract}
We analyze the revenue loss due to market shrinkage. Specifically,
consider a simple market with one item for sale and $n$ bidders whose
values are drawn from some joint distribution. Suppose that the
market shrinks as a single bidder retires from the
market. Suppose furthermore that the value of this retiring bidder
is fixed and always strictly smaller than the values of the other
bidders. We show that even this slight decrease in competition might
cause a significant fall of a multiplicative factor of $\frac{1}{e+1}\approx0.268$
in the revenue that can be obtained by a dominant strategy ex-post
individually rational mechanism.

In particular, our results imply a solution to an open question that
was posed by Dobzinski, Fu, and Kleinberg {[}STOC'11{]}. 
\end{abstract}

\epigraph{''Take heed of the children of the poor, for from them will Torah come forth.''\\\hspace{2.5in}--- Babylonian Talmud, Nedarim 81a.}

\section{Introduction}

How much revenue might a firm lose due to market shrinkage? We study
this question in a simple market with $n$ bidders and one item. The
private values of the bidders $(v_{1},\ldots,v_{n})$ are drawn from
some known distribution. Now suppose that one bidder retires from
the market. Our goal is to compare the maximum revenue that can be
extracted in the original market by a dominant-strategy mechanism
that is also ex-post individually rational to the maximum revenue that
can be obtained by a similar mechanism in the smaller market.

Obviously, if the value of the retiring bidder is always much larger
than the values of the rest of the bidders then almost all revenue
will be lost. Thus, we consider an extreme situation where the retiring
bidder is the weakest competitor in the market: in every realization
$(v_{1},\ldots,v_{n})$ the value of the retiring
bidder $v_{i}$ is always strictly smaller than the value of any other
bidder $v_{j}$. Furthermore, we will assume that $v_{i}$ is identical
in all realizations, so the value of the retiring bidder $v_{i}$
conveys no information at all about the values of other bidders.

One might speculate that as the number of bidders increases the relative
contribution to the revenue of payments by this retiring bidder
diminishes. However, we show that \textendash{} perhaps counter-intuitively
\textendash{} even this slight decrease in competition, i.e., the
same large market but with the absence of its least valuable consumer,
might cause the revenue to fall by a constant multiplicative factor,
independently of the size of the market:

\vspace{0.1in}\noindent  \textbf{Theorem (informal): } For any $n$, there exists a joint distribution ${\mathcal{H}_{n}}$ over the values of $n$ bidders with the following
properties: 
\begin{itemize}
\item Bidder $n$ is a ``weak'' bidder (as discussed above).
\item The maximum expected revenue that can be extracted by a dominant strategy
ex-post individually rational mechanism in a market with $n$ bidders
whose values are distributed according to $\mathcal{H}_{n}$ is at
least $1$. 
\item Let $\mathcal{H}_{n-1}$ be the joint distribution over $n-1$ values that
is obtained from $\mathcal{H}_{n}$ by removing the value of bidder
$n$. The maximum expected revenue that can be extracted by a dominant
strategy ex-post individually rational mechanism in a market with
$n-1$ bidders whose values are distributed according to $\mathcal{H}_{n-1}$
is at most $\frac{e}{e+1}\approx0.731$. 
\end{itemize}
A dual interpretation of our result is that in some markets firms should consider investing effort in market expansion, as even recruiting a single low value consumer might lead to a revenue surge. Obviously there are markets in which recruiting low value consumers does not lead to a significant increase in the revenue -- understanding whether there are practical distributions in which recruiting low value consumers leads to a significant increase in the revenue is an interesting open question.

\noindent As we will discuss later, the $\frac{e}{e+1}$ ratio is 
essentially tight, by a result of \cite{chen2011approximation}. Interestingly,
we cannot hope to obtain a similar result with independent valuations:
when the values are distributed independently and identically, it
is easy to see (directly or by applying market-expansion theorems
like Bulow-Klemperer \cite{bulow1994auctions}) that a removal of
any single bidder decreases the revenue by a factor of at most $\frac{1}{n}$,
where $n$ is the number of bidders in the market. A more careful
argument gives that this factor continues to hold when removing the weakest bidder from a market in which the values are distributed independently but not necessarily identically \cite{ronen2001approximating}.

We note that our theorem holds regardless of whether we compare the
best randomized truthful in expectation mechanisms in both markets
or the best deterministic mechanisms.

\subsection*{Connection to Previous Work}

Our result is directly connected to the literature on approximating
revenue maximizing auctions when the values of the bidders are correlated.
This line of research was initiated by Ronen \cite{ronen2001approximating}.
In particular, Ronen introduces the Lookahead auction: this is the
dominant strategy, ex-post individually rational revenue maximizing
auction among all auctions that are only allowed to sell to the bidder
with the highest value. Ronen shows that the Lookahead auction extracts
in expectation at least half of the expected revenue of the \emph{unconstrained}
dominant-strategy individually rational revenue maximizing mechanism\footnote{Cremer and Mclean \cite{cremer1988full} show that under certain assumptions
on the distribution there is a dominant-strategy mechanism that extracts
\emph{all} the surplus of the bidders. However, their mechanism is
only ex-interim individually rational, whereas our interest here is
in ex-post individually rational mechanisms.}.

The $k$-lookahead auction is a natural generalization: this is the
dominant strategy, ex-post individually rational revenue maximizing
auction among all auctions that are only allowed to sell the item to one of the $k$ bidders
with the highest values. Dobzinski, Fu, and Kleinberg \cite{dobzinski2011optimal}
show that the $k$-lookahead extracts a fraction of at least $\frac{2k-1}{3k-1}$
of the revenue of the unconstrained revenue maximizing mechanism.
The analysis was improved by \cite{chen2011approximation} where it
was shown that the fraction is at least $\frac{e^{1-\frac{1}{k}}}{e^{1-\frac{1}{k}}+1}$
and that this is tight for $k=2$.

However, a question that was left open in \cite{dobzinski2011optimal,chen2011approximation}
is to determine whether the expected revenue of the $k$-lookahead
auction approaches $1$ as $k<n$ increases\footnote{Previously, for $k>2$ the best result was that for every $k$ there
is a distribution for which the $k$-lookahead auction does not extract
more than $\frac{k}{k+2}$ of the revenue \cite{chen2011approximation},
improving over the $\frac{k}{k+1}$ factor obtained by \cite{dobzinski2011optimal}.}. Our result answers this question and shows that it does not since
in the presence of a weak bidder the revenue of the $k$-lookahead
auction on the original market is actually identical to the revenue
of the revenue-maximizing mechanism on the smaller market. Furthermore,
our result is asymptotically tight since
the revenue of the $k$-lookahead auction is at least  $\frac{e^{1-\frac{1}{k}}}{e^{1-\frac{1}{k}}+1}$ of the revenue of the optimal auction \cite{chen2011approximation},
and this expression approaches $\frac{e}{e+1}$ as $k$ grows. Note that there is a gap between the bounds for any constant $k\geq 3$, closing it remains an interesting open question.

Our result also has some implications on the computational complexity
of approximating revenue maximizing auctions \cite{dobzinski2011optimal,papadimitriou2011optimal,Caragiannis:2016:LDA:2956681.2934309,chen2011approximation}. Specifically, we show that the analysis of the best currently known polynomial time truthful in expectation mechanism cannot be improved.

We now elaborate on this point. It is known that determining the revenue of the revenue-maximizing \emph{deterministic} auction for three bidders or more is NP hard within some constant factor   \cite{papadimitriou2011optimal,Caragiannis:2016:LDA:2956681.2934309}.
In contrast, a revenue maximizing \emph{truthful in expectation} mechanism
can be computed in polynomial time for any fixed number of bidders \cite{dobzinski2011optimal}. Thus, for
every constant $k$ the $k$-lookahead auction can be implemented in
polynomial time. Combining with the bound of \cite{chen2011approximation},
we get that there is a polynomial time truthful in expectation
mechanism that extracts in expectation an $\frac{e}{e+1}$ fraction
of the revenue of the optimal truthful in expectation mechanism. This is the best bound known for truthful in expectation mechanisms. As we discussed, it was not known whether the analysis of the approximation ratio of the $k$-lookahead auction is tight or not. Our result implies that we cannot hope to improve the bound by improving the analysis of the $k$-lookahead auction. 

\section{Preliminaries}

We consider a single item auction setting with $n$ bidders. Each bidder $i$ has a (privately
known) value $v_i$ for the item. The values are drawn from some joint distribution $\mathcal D$. A (direct) mechanism $M$ takes a profile $v=(v_1,\ldots, v_n)$ and returns
an allocation probability and a non-negative expected price for each bidder. We use $M\left(v\right)=\left(\left(x_{1}^{M\left(v\right)},p_{1}^{M\left(v\right)}\right),\dots,\left(x_{n}^{M\left(v\right)},p_{n}^{M\left(v\right)}\right)\right)$
to denote the outcome. Thus, given $v$, $M$ allocates
to bidder $i$ with probability $x_{i}^{M\left(v\right)}$ and bidder
$i$ pays $\frac{p_{i}^{M\left(v\right)}}{x_{i}^{M\left(v\right)}}$
if allocated.  A mechanism $M$ is \emph{ex-post individually
rational} (IR) if for all $v$ and $i$: $x_{i}^{M\left(v\right)}\cdot v_{i}\geq p_{i}^{M\left(v\right)}$. 

A mechanism $M$ is \emph{truthful in expectation} if for every $v_{i},v'_{i}$ and $v_{-i}$: $x_{i}^{M\left(v_{i},v_{-i}\right)}\cdot v_{i}-p_{i}^{M\left(v_{i},v_{-i}\right)}\geq x_{i}^{M\left(v'_{i},v_{-i}\right)}\cdot v_{i}-p_{i}^{M\left(v'_{i},v_{-i}\right)}$.
Notice that truthfulness should hold also for profiles that are not in the support of $\mathcal{D}$. The expected revenue of $M$ over $\mathcal D$ is $\sum_{v\in\mathcal{D}}\Pr_{\mathcal{D}}\left[v\right]\left(\sum_{i=1}^{n}p_{i}^{M\left(v\right)}\right)$.
A truthful in expectation mechanism $M$ is \emph{optimal} if the revenue of $M$ is at least that of any other mechanism $M'$. We let $rev(\mathcal D)$ denote the supremum of the revenue\footnote{The support of the distribution that we consider are infinite, so possibly no mechanism attains this supremum.} that can be extracted by a truthful mechanism when the values are distributed according to $\mathcal D$ and $rev_{\mathcal D}(M)$ the revenue of a specific mechanism $M$. We will sometimes omit the subscript $\mathcal D$ when $\mathcal D$ is known from the context.




A mechanism $M$ is \emph{monotone} if for every $i$, $v_{i},v'_{i},v_{-i}$
s.t. $v_{i}<v'_{i}$ we have that $x_{i}^{M\left(v_{i},v_{-i}\right)}\leq x_{i}^{M\left(v'_{i},v_{-i}\right)}$. It is well known that a mechanism can be implemented truthfully if and only if it is monotone. The following proposition gives the payments:
\begin{proposition}[\cite{myerson1981optimal}]\label{prop:randomized}
A mechanism $M$ is truthful in expectation if and only if $$p_i^{M(v)}=\int_0^{v_i}z\cdot \frac d {dz} x_i^{M(z,v_{-i})}dz$$
\end{proposition}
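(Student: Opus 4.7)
The plan is to reproduce the standard envelope-theorem argument underlying Myerson's characterization, applied to the randomized setting described in the excerpt. Fix a bidder $i$ and a profile $v_{-i}$ of reports of the other bidders. Write $X(z) := x_i^{M(z,v_{-i})}$ and $P(z):=p_i^{M(z,v_{-i})}$, and define the utility of bidder $i$ having true value $v_i$ and reporting $b_i$ by
$$U(v_i,b_i) := v_i\,X(b_i) - P(b_i).$$
Truthfulness in expectation is exactly the statement that $u(v_i):=U(v_i,v_i)=\sup_{b_i} U(v_i,b_i)$. As a supremum of affine functions of $v_i$, $u$ is convex and hence differentiable almost everywhere, with $u'(v_i)=X(v_i)$ a.e.\ by the envelope theorem.

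For the necessity (forward) direction, I would first argue that $P(0)=0$: ex-post IR at $v_i=0$ forces $P(0)\le 0$, while the paper's nonnegative-payment convention forces $P(0)\ge 0$. Integrating $u'=X$ then gives $u(v_i)=\int_0^{v_i}X(z)\,dz$, so
$$P(v_i) = v_i\,X(v_i) - \int_0^{v_i}X(z)\,dz.$$
A single integration by parts, treating $X$ (which truthfulness forces to be monotone) as a function of bounded variation and $\frac{d}{dz}X(z)$ as its Lebesgue--Stieltjes derivative, converts this into the claimed identity $P(v_i)=\int_0^{v_i} z\,\frac{d}{dz}X(z)\,dz$.

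For the sufficiency (reverse) direction, assume the payment formula holds and that $X$ is monotone (this monotonicity is the preceding "well known" characterization that the paper invokes just before the proposition). A direct calculation using the formula gives
$$U(v_i,v_i)-U(v_i,b_i) = \int_{b_i}^{v_i}\bigl(X(v_i)-X(z)\bigr)\,dz,$$
which is nonnegative regardless of whether $b_i<v_i$ or $b_i>v_i$, by monotonicity of $X$. Hence $M$ is truthful in expectation.

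The only real subtlety, and the step I expect to be the main obstacle to a fully rigorous write-up, is the handling of points at which $X(\cdot,v_{-i})$ fails to be differentiable. Because $X$ is only monotone, $\frac{d}{dz}X$ should be read as the Lebesgue--Stieltjes measure $dX$, which may include point masses at jumps of $X$; the envelope identity $u'=X$ holds a.e.\ and the integration by parts and the final verification step must both be carried out in that Stieltjes sense. Once this is set up, all the manipulations above go through unchanged.
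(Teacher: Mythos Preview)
The paper does not supply its own proof of this proposition: it is quoted as a classical result from \cite{myerson1981optimal}, stated immediately after the monotonicity characterization and used as a black box in Sections~\ref{subsec:Prooflem1} and~\ref{subsec:Prooflem2}. So there is nothing in the paper to compare your argument against; what you have written is essentially the standard Myerson envelope-theorem derivation, and that is exactly what the citation points to.

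One small slip worth fixing in your sufficiency step: the identity you obtain from the payment formula is
\[
U(v_i,v_i)-U(v_i,b_i)=\int_{b_i}^{v_i}\bigl(X(z)-X(b_i)\bigr)\,dz,
\]
not $\int_{b_i}^{v_i}\bigl(X(v_i)-X(z)\bigr)\,dz$. Both expressions are nonnegative under monotonicity of $X$, so your conclusion survives, but the displayed line as written does not follow from the calculation. Your remark that $\tfrac{d}{dz}X$ must be read in the Lebesgue--Stieltjes sense is correct and matches how the paper implicitly uses the formula (e.g., in Claim~\ref{claim-payments}, where the payment is interpreted as the area between a monotone step curve and the $y$-axis).
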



\section{A Market with a Revenue Loss of $\frac{1}{\lowercase{e}+1}$}\label{sec:k-LA}


Let $\mathcal H_n$ be a distribution over the values of $n$ bidders. We say that bidder $i$ is a weak bidder in $\mathcal H_n$ if $v_i$ is the same in every profile $(v_1,\ldots, v_n)$ in the support of $\mathcal H_n$ and furthermore we have that $v_i< v_{i'}$ for all $i\neq i'$. $\mathcal H_n$ \emph{contains a weak bidder} if there is some bidder that is weak in $\mathcal H_n$. Without loss of generality, we will assume that the weak bidder is bidder $n$. Given a distribution $\mathcal H_n$ which contains a weak bidder, let $\mathcal H_{n-1}$ be the \emph{distribution of $\mathcal H_n$ after shrinkage}: a distribution over the values of bidders $1,...,n-1$ which is obtained by sampling from $\mathcal H_n$ and ignoring the value of the weak bidder $n$. Our main result analyzes the revenue loss due to the shrinkage:
\begin{theorem}\label{thm:cplx}
For every $n\geq 3$ and $\delta>0$, there exist a distribution $\mathcal H_n$ that contains a weak bidder and a distribution $\mathcal H_{n-1}$ of $\mathcal H_n$ after shrinkage such that  $\frac{rev\left(\mathcal H_{n-1}\right)}{rev\left(\mathcal H_n\right)}<\frac{e}{e+1}+\delta$.
\end{theorem}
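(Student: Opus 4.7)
Plan: My plan is to construct a family of distributions $\mathcal H_n$ containing a weak bidder, together with a lower bound on $rev(\mathcal H_n)$ and an upper bound on $rev(\mathcal H_{n-1})$, whose ratio approaches $\frac{e}{e+1}$ from below. The starting observation that guides the construction is that restricting any truthful mechanism $M$ on $\mathcal H_n$ to bidders $1,\ldots,n-1$ (by holding $v_n=v_n^*$ fixed) yields a truthful mechanism on $\mathcal H_{n-1}$ with the same allocation and payments on those bidders, so the payment collected from the non-weak bidders is always at most $rev(\mathcal H_{n-1})$; meanwhile the weak bidder contributes at most $v_n^*\cdot\text{(expected allocation to bidder $n$)}$ by ex-post IR. The role of the weak bidder in boosting revenue is therefore to absorb allocation probability that the mechanism would otherwise have to discard---selling to the weak bidder at price $v_n^*$ becomes profitable precisely in realizations where the optimal $\mathcal H_{n-1}$ mechanism allocates with total probability less than one.

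The construction adapts the tight instances for the $k$-lookahead auction from \cite{dobzinski2011optimal,chen2011approximation}: the joint distribution of the non-weak bidders is chosen to have many ``states'' (parameterized by a large integer $K$) in which a designated non-weak bidder has a ``high'' value drawn from a geometric-like schedule while the other non-weak bidders have ``low'' values, with state probabilities tuned so that the optimal mechanism on $\mathcal H_{n-1}$ has a tractable closed form and leaves substantial expected residual allocation mass. The weak bidder's constant value $v_n^*$ is taken strictly below every value that appears and tuned against the other parameters so that, as $K\to\infty$, the ratio tends to $\frac{e}{e+1}$. For the lower bound on $rev(\mathcal H_n)$ I would exhibit an explicit truthful mechanism---essentially the optimal $\mathcal H_{n-1}$ mechanism on the non-weak bidders, supplemented by an allocation to the weak bidder on the residual probability mass---with payments derived from Proposition~\ref{prop:randomized}.

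The matching upper bound on $rev(\mathcal H_{n-1})$ is the main obstacle. Because the support of $\mathcal H_{n-1}$ is infinite and its values are correlated, one must rule out every truthful-in-expectation mechanism, which requires a careful analysis of all monotone allocation rules and the associated Myerson payments. My plan is to express revenue as a linear functional of the monotone allocation rule and either construct a dual-feasible witness (in the spirit of virtual-value duality) or carry out a direct combinatorial argument exploiting the multi-state structure of the distribution, pushing a balance that mirrors the $1/e$ factor appearing in the final ratio. A secondary technicality is verifying that the lower-bound mechanism is monotone and IC for \emph{all} reports, including off-support values of both the weak bidder and the other bidders, since Proposition~\ref{prop:randomized} requires truthfulness globally and not only on the support of $\mathcal H_n$.
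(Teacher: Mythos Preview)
Your high-level plan matches the paper's structure, and your observation about the weak bidder absorbing residual allocation mass is exactly the role he plays. However, the sketch has a genuine gap: you propose to ``adapt the tight instances for the $k$-lookahead auction from \cite{dobzinski2011optimal,chen2011approximation},'' but those instances establish ratios of only $\frac{k}{k+1}$ and $\frac{k}{k+2}$, which tend to $1$ rather than $\frac{e}{e+1}$ as $k$ grows. The whole point of the present paper is that a new construction is required to close that gap, so starting from those instances will not get you to $\frac{e}{e+1}$.

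The two specific ingredients missing from your sketch are precisely what make the upper bound on $rev(\mathcal H_{n-1})$ go through. First, the high bidder's conditional distribution is not ``geometric-like'' but an \emph{equal-revenue} distribution $\mathcal D_h$ (supported on $\{t_0,\ldots,t_h\}$, every threshold yielding revenue $z=\tfrac{e}{e-1}$); this is what lets Lemma~\ref{lem:firstLemComplex} convert any mechanism into a ``high-priced'' one that sells to bidder~$1$ only at price $t_{h(v_{-1})}$, with no revenue loss. Second --- and this is the novel idea --- the state $h$ is encoded as $h=\bigl(\sum_{i=2}^{n-1} X_i\bigr)\bmod d$ where each $X_i$ is an $(\epsilon,d)$-\emph{balanced} integer random variable (Definition~\ref{def:balanced}). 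Because every single $v_i$ is needed to recover $h$, any monotone allocation to some bidder $i\in\{2,\ldots,n-1\}$ forces commitments across the $d$ neighboring values of $v_i$ that cycle through all states; Lemma~\ref{lem:SecondLemComplex} carries out a direct mass-shifting argument (not virtual-value duality) showing that reallocating from bidder~$i$ to bidder~$1$ strictly increases revenue, via the balanced property and the identity $\overline{q_y}+(d-y-1)q_y=z$ of Lemma~\ref{lem:calcs}. Neither the equal-revenue structure nor the balanced encoding appears in your proposal, and without them the upper bound you flag as ``the main obstacle'' does not go through.
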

As noted in the introduction, this ratio ($\approx 0.731$) is asymptotically tight \cite{chen2011approximation}. For $n=2$ the right ratio is $\frac 1 2$ \cite{ronen2001approximating} and for $n=3$ the right ratio is $\frac{ \sqrt e}{1+ \sqrt e}$ \cite{chen2011approximation}. 

The rest of this section is devoted to proving Theorem \ref{thm:cplx} and we start by giving some intuition on the proof. We will construct $\mathcal H_{n}$, a distribution over the values of $n$ bidders with the following properties:

\begin{itemize}
\item The value of bidder $1$ will be selected from a family of equal revenue distributions, each with an expected revenue of $\frac e {e-1}$. 
\item We will always have $v_2,\ldots, v_{n-1}\approx 1$ in the support of $\mathcal H_{n-1}$.  The precise values of $v_2,\ldots, v_{n-1}$ will jointly encode a parameter $h$ that will determine the specific distribution of $v_1$.
\item The value of bidder $n$ is always fixed to $1$.
\end{itemize}

We will see that to maximize revenue one has to determine the parameter $h$ by observing $v_2,\ldots, v_{n-1}$, and offer bidder $1$ to purchase the item at a price that is a function of $h$. If bidder $1$ declines to purchase the item at the requested price, we obviously want to sell the item to one of the bidders $2,\ldots,n$ at the highest price possible, which is approximately $1$ (since the values of bidders $2,\ldots, n$ is approximately $1$). However, we will not be able to sell the item to bidders $2,\ldots, n-1$. The intuitive reason is that we've used their values in order to determine $h$. However, we may sell to bidder $n$ since we have not used $v_n$ to determine the price for bidder $1$. Thus, the revenue is maximized in a mechanism that determines the take-it-or-leave-it offer to bidder $1$ by querying the values of $v_2,\ldots, v_{n-1}$, and if bidder $1$ rejects the offer the item is sold to bidder $n$ for price $1$.

The point is that bidder $n$ is a weak bidder in $\mathcal H_n$. In particular, to maximize revenue in the distribution $\mathcal H_{n-1}$ (that is obtained from $\mathcal H_n$ by removing bidder $n$), one still has to determine $h$ by querying $v_2,\ldots, v_{n-1}$ and set accordingly a take-it-or-leave-it offer to bidder $1$. However, if bidder $1$ rejects the offer we cannot sell the item at all. The gap in the revenue between the optimal mechanisms before and after the shrinkage is therefore exactly the probability that optimal mechanisms for $\mathcal H_n$ sell the item to bidder $n$. Equivalently, this is the probability that the item is not sold at all in an optimal mechanism for the distribution $\mathcal H_{n-1}$.

\subsection{\label{subsec:A-hard-distribution}The Distribution over $n$ Bidders}

We start with defining the distribution over the values of $n$ bidders $\mathcal H_n$. The next definition will be used to determine the specific values of $v_2,\ldots, v_{n-1}$ which in turn will be used to determine the value of the parameter $h$.
\begin{definition}\label{def:balanced}Let $X_{\epsilon}^{d}$ be a random variable with support over the positive integers. Let $1>\epsilon>0$. $X_{\epsilon}^{d}$ is  $\left(\epsilon,d\right)-\emph{balanced}$ if for every integer $i\geq 1$ s.t. $i\mod d=1$:
\begin{enumerate}
\item $\Pr\left[X_{\epsilon}^{d}={i}\right]=\Pr\left[X_{\epsilon}^{d}={i+1}\right]=\dots=\Pr\left[X_{\epsilon}^{d}={i+(d-1)}\right]$.
\item $\left(1-\epsilon\right)\Pr\left[X_{\epsilon}^{d}={i}\right]=\Pr\left[X_{\epsilon}^{d}={i+d}\right]$.
\end{enumerate}
\end{definition}
Note that if $X_{\epsilon}^{d}$ is $(\epsilon,d)$-balanced
then for all $\ell\geq1$ and $j\in\left\{ 1,\dots,d\right\} $, $\Pr\left[X_{\epsilon}^{d}=\ell\right]\geq\Pr\left[X_{\epsilon}^{d}=\ell+j\right]\geq\left(1-\epsilon\right)\Pr\left[X_{\epsilon}^{d}=\ell\right]$. When $\epsilon$ is known from the context we sometimes refer to $\left(\epsilon,d\right)$-balanced random variables as $d$-balanced.  Appendix \ref{prop:balancedexistance} shows the existence of $(\epsilon,d)$-balanced distributions.

Define $\mathcal H_n$, a distribution (with parameters $d$,$\epsilon$)
over the values of $n$ bidders: let $z=\frac{e}{e-1}$, $m=\left\lfloor \frac{d}{z}-1\right\rfloor $. Let $\mathcal{D}_{0},\dots,\mathcal{D}_{m-1}$ be $m$ distributions over $\mathbb{R}$ that will be defined in Subsection \ref{subsec:The-distributions}:
\begin{itemize}
\item $v_n$ is always fixed to $1-2\epsilon$.
\item For every $2\leq i< n$, let $v_{i}=1-2\epsilon+\epsilon\cdot \Sigma_{j=1}^{X_i}\frac 1 {2^j}$ where $X_i$ is an independent $(\epsilon,d)$-balanced variable.
\item Let $h(v_{-1})=\min \left(\sum_{j=2}^{n-1}X_i\mod d,m-1\right)$. $v_1$ is distributed $\mathcal{D}_{h\left(v_{-1}\right)}$.

\end{itemize}

\subsubsection{\label{subsec:The-distributions}The Distributions $\mathcal{D}_{0},\dots,\mathcal{D}_{m-1}$}

We now describe the distributions $\mathcal{D}_{0},\dots,\mathcal{D}_{m-1}$. All are equal revenue distributions. The description is technical and might require some time to digest. However, we do note that for the analysis we will mostly refer to the simple properties that are stated in Lemma \ref{lem:calcs}.

Define the probabilities $q_{0},\dots,q_{m-2}\in\mathbb{R}$ and $\overline{q_{0}},\dots\overline{q_{m-1}}\in\mathbb{R}$:
\begin{itemize}
\item For every $y\in\left\{ 0,\dots,m-2\right\} $: $q_{y}=\frac{\left(z-1\right)d}{\left(d-y\right)\left(d-y-1\right)}$.
\item For every $y\in\left\{ 0,\dots,m-1\right\} $: $\overline{q_{y}}=1-\sum_{j=0}^{y-1}q_{j}$.
\end{itemize}
Define the values $t_{0},t_{1},\dots,t_{m-1}\in\mathbb{R}$: $t_{y}=\frac{z}{\left(1-\sum_{j=0}^{y-1}q_{j}\right)}=\frac{z}{\overline{q_{y}}}$.
Note that $z=t_{0}<t_{1}<\dots<t_{m-1}$. We now define $\mathcal{D}_{0},\dots,\mathcal{D}_{m-1}$. First, $D_0(x)= t_0=z$ with probability $1$. For every $1\leq y \leq m-1$:

\[
\mathcal{D}_{y}\left(x\right)=\begin{cases}
t_0 & \text{w.p. } q_{0}\\
t_1 & \text{w.p. }q_{1} \\
\vdots & \vdots\\
t_{y-1} & \text{w.p. }q_{y-1} \\
t_y & \text{w.p. }\overline{q_{y}}
\end{cases}
\]
Notice that the support of each distribution $\mathcal{D}_{i+1}$ contains the support of $\mathcal{D}_{i}$ and one additional value. Moreover, all distributions are equal revenue distributions (see Lemma \ref{lem:calcs}). We note that:
\begin{itemize}
\item For every $y\in\left\{ 0,\dots,m-1\right\} $ and every $j\in\left\{ y,\dots,m-1\right\} $: $\overline{q_{y}}=\Pr\left[v_{1}=t_{y}|h\left(v_{-1}\right)=y\right]=\Pr\left[v_{1}\geq t_{y}|h\left(v_{-1}\right)=j\right]$.
\item For every $y\in\left\{ 0,\dots,m-2\right\} $: $q_{y}=\Pr\left[v_{1}=t_{y}|y<h\left(v_{-1}\right)\leq m-1\right]$.
\end{itemize}
\begin{proof}
We prove only the first claim, the second part is immediate from the definitions. For the first bullet point, let $y\in\{1,\ldots,m-1\}$ and let $m-1\geq j\geq l$. Consider $\mathcal{D}_{y}$ and $\mathcal{D}_{j}$. The probability that bidder $1$'s value is lower than $t_y$ if his value is distributed $\mathcal{D}_{y}$ or $\mathcal{D}_{j}$ is $\sum_{k=0}^{j-1}q_{k}$, as his probability for every value $t_k$ is $q_k$ for every $k\in\{0,\ldots,y-1\}$. Thus the probability of his valuation being higher is $1-\sum_{k=0}^{j-1}q_{k}$. For $\mathcal{D}_{y}$, the only higher value possible is $t_y$.
\end{proof}
Next we prove some simple and useful facts related to $\mathcal{D}_{0},\dots,\mathcal{D}_{m-1}$.
\begin{lemma}\label{lem:calcs}
For $d\geq4$ the following holds:
\begin{enumerate}
\item For all $y\in\left\{ 0,\dots,m-1\right\} $, $\overline{q_{y}}\cdot t_{y}=z$ (i.e., $\mathcal{D}_{0},\dots,\mathcal{D}_{m-1}$ are equal
revenue distributions). 
\item For all $y\in\left\{ 0,\dots,m-2\right\} $, $\overline{q_{y+1}}+q_{y}=\overline{q_{y}}$.
\item For all $y\in\left\{ 1,\dots,m-1\right\} ,$ $\sum_{j=0}^{y-1}q_{j}=\frac{\left(z-1\right)}{\left(d-y\right)}y$.
\item For all $y\in\left\{ 0,\dots,m-1\right\} $, $\overline{q_{y}}=\frac{d-z\cdot y}{d-y}$
.
\item For all $y\in\left\{ 0,\dots,m-2\right\} $, $\overline{q_{y}}+\left(d-y-1\right)q_{y}=z=\overline{q_{y}}\cdot t_{y}$.
\end{enumerate}
\begin{proof}\ 
\begin{enumerate}
\item $\overline{q_{y}}\cdot t_{y}=\overline{q_{y}}\cdot \frac{z}{\overline{q_{y}}}=z$.
\item $\overline{q_{y+1}}+q_{y}=\left(1-\sum_{j=0}^{y}q_{y}+q_{y}\right)=\left(1-\sum_{j=0}^{y-1}q_{y}\right)=\overline{q_{y}}$.
\item $\sum_{j=0}^{y-1}q_{j}=\sum_{j=0}^{y-1}\frac{\left(z-1\right)d}{\left(d-j\right)\left(d-j-1\right)}=\left(z-1\right)d\cdot\sum_{j=0}^{y-1}\frac{1}{\left(d-j\right)\left(d-j-1\right)}=\left(z-1\right)d\cdot\sum_{r=d-\left(y-1\right)}^{d}\frac{1}{r\left(r-1\right)}=\left(z-1\right)d\cdot\sum_{r=d-\left(y-1\right)}^{d}\left(\frac{1}{r-1}-\frac{1}{r}\right)=\left(z-1\right)\cdot d\cdot\left(\frac{1}{d-y}-\frac{1}{d}\right)=\left(z-1\right)\cdot d\cdot\frac{y}{d\left(d-y\right)}=\frac{\left(z-1\right)}{\left(d-y\right)}\cdot y$.
\item If $y=0$ then $\overline{q_{y}}=1=\frac{d-z\cdot0}{d-0}$. If $y>0$ then, using property $3$, $\overline{q_{y}}=\left(1-\sum_{j=0}^{y-1}q_{j}\right)=1-\frac{\left(z-1\right)}{d-y}\cdot y=\frac{d-z\cdot y}{d-y}$.
\item $\overline{q_{y}}+\left(d-y-1\right)q_{y}=\frac{d-z\cdot y}{\left(d-y\right)}+\left(d-y-1\right)\frac{\left(z-1\right)d}{\left(d-y\right)\left(d-y-1\right)}=\frac{d-z\cdot y+z\cdot d-d}{\left(d-y\right)}=\frac{z\cdot\left(d-y\right)}{\left(d-y\right)}=z$.
\end{enumerate}
\end{proof}
\end{lemma}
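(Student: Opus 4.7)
The plan is to derive all five identities from a single key algebraic observation: the expression $q_j = \frac{(z-1)d}{(d-j)(d-j-1)}$ admits a partial-fraction decomposition that makes the sum in item 3 telescope. Once item 3 is in hand, item 4 follows immediately from the definition $\overline{q_y} = 1 - \sum_{j=0}^{y-1} q_j$, and items 1, 2, 5 are short substitutions.

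First I would prove item 3. Write
\[
q_j = (z-1)d \cdot \frac{1}{(d-j)(d-j-1)} = (z-1)d\left(\frac{1}{d-j-1} - \frac{1}{d-j}\right),
\]
so $\sum_{j=0}^{y-1} q_j$ telescopes to $(z-1)d\left(\frac{1}{d-y} - \frac{1}{d}\right) = \frac{(z-1)\,y}{d-y}$. For item 4, the case $y=0$ is immediate since the defining sum is empty and $\overline{q_0} = 1 = \frac{d}{d}$; for $y \geq 1$, substitute item 3 into the definition to get $\overline{q_y} = 1 - \frac{(z-1)y}{d-y} = \frac{d - zy}{d-y}$.

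Item 1 is then a one-line check from the definition of $t_y$: $\overline{q_y} \cdot t_y = \overline{q_y} \cdot \frac{z}{\overline{q_y}} = z$. Item 2 follows directly by splitting off the last term of the defining sum:
\[
\overline{q_{y+1}} + q_y = \Big(1 - \sum_{j=0}^{y} q_j\Big) + q_y = 1 - \sum_{j=0}^{y-1} q_j = \overline{q_y}.
\]
For item 5, substitute the formulas for $\overline{q_y}$ (from item 4) and $q_y$ (from the definition):
\[
\overline{q_y} + (d-y-1)\,q_y = \frac{d - zy}{d-y} + (d-y-1)\cdot\frac{(z-1)d}{(d-y)(d-y-1)} = \frac{d - zy + (z-1)d}{d-y} = \frac{z(d-y)}{d-y} = z,
\]
and the equality $z = \overline{q_y}\cdot t_y$ is just item 1.

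There is no real obstacle here — the only nontrivial step is noticing the partial-fraction identity that turns item 3 into a telescoping sum. I would also make a passing remark that the hypothesis $d \geq 4$ (together with $m = \lfloor d/z - 1\rfloor$ and $z = e/(e-1)$) guarantees that $d - y \geq 1$ and $d - y - 1 \geq 1$ for all $y$ in the stated ranges, so no denominator vanishes.
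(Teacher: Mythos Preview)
Your proof is correct and follows essentially the same approach as the paper: both use the partial-fraction decomposition to telescope item 3, derive item 4 from it, and handle items 1, 2, 5 as direct substitutions from the definitions. The only differences are cosmetic (you reorder the items and add the remark about nonvanishing denominators).
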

\begin{claim}
$\mathcal{D}_{0},\dots,\mathcal{D}_{m-1}$ are valid distributions.
\end{claim}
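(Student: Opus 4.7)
The plan is to check the two defining properties of a probability distribution: the listed probabilities in each $\mathcal{D}_y$ are non-negative and they sum to $1$. For $\mathcal{D}_0$ there is nothing to do since all mass is on $t_0$, so fix $y\in\{1,\dots,m-1\}$.

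I would handle the sum-to-$1$ condition first because it is immediate from the definition: the total probability is $\sum_{j=0}^{y-1}q_j+\overline{q_y}$, and by definition $\overline{q_y}=1-\sum_{j=0}^{y-1}q_j$, so the sum is exactly $1$. No arithmetic is needed beyond this.

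The substantive step is non-negativity, split into two checks. First, for $q_y=\frac{(z-1)d}{(d-y)(d-y-1)}$, the numerator is positive because $z=\frac{e}{e-1}>1$, and the denominator is positive provided $d-y-1>0$. Since $y\leq m-2$ and $m=\lfloor d/z-1\rfloor$, we have $y\leq d/z-3$, and then $d-y-1\geq d(1-1/z)+2>0$ (using $z>1$ and $d\geq 4$). Second, for $\overline{q_y}$, I would invoke Lemma \ref{lem:calcs}(4) to rewrite it as $\overline{q_y}=\frac{d-zy}{d-y}$. The denominator is positive since $y\leq m-1<d$, and for the numerator the constraint $y\leq m-1=\lfloor d/z-1\rfloor\leq d/z-1$ gives $zy\leq d-z<d$, so $d-zy>0$.

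The only real obstacle, and it is a mild one, is bookkeeping with the definition $m=\lfloor d/z-1\rfloor$ to confirm that the indices $y$ that actually appear in the support of some $\mathcal{D}_y$ keep both $d-y-1$ and $d-zy$ strictly positive; once this is tracked, the claim reduces to elementary positivity of rational expressions, and Lemma \ref{lem:calcs}(4) does all the heavy lifting for $\overline{q_y}$.
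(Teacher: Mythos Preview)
Your argument is correct and essentially matches the paper's, which likewise verifies the sum-to-$1$ condition directly from the definition of $\overline{q_y}$ and then checks positivity via Lemma \ref{lem:calcs} (the paper bounds $\sum_{i=0}^{j-1} q_i<1$ using part (3), whereas you use the equivalent closed form in part (4) for $\overline{q_y}$). There is one harmless slip in your index bookkeeping: since $m=\lfloor d/z-1\rfloor$, the equality you wrote as $m-1=\lfloor d/z-1\rfloor$ should read $m-1=\lfloor d/z-1\rfloor-1$, but the ensuing bound $y\leq d/z-1$ (and hence $d-zy>0$) still holds, so nothing breaks.
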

\begin{proof}
$\mathcal{D}_{0}$ is valid, as it is a probability distribution
over one value with probability $1$. For $\mathcal{D}_{j}$ where
$1\leq j\leq m-1$, we need to show that $1>q_{j}>0$, $1>\overline{q_{j}}>0$,
$\sum_{i=0}^{j-1}q_{i}<1$ and that $\sum_{i=0}^{j-1}q_{i}+\overline{q_{j}}=1$.
By definition $\sum_{i=0}^{j-1}q_{i}+\overline{q_{j}}=\sum_{i=0}^{j-1}q_{i}+\left(1-\sum_{i=0}^{j-1}q_{i}\right)=1$.
We will show that $\sum_{i=0}^{j-1}q_{i}<1$ and the rest follows,
as $\overline{q_{i}}=1-\sum_{j=0}^{i-1}q_{j}$ and $q_{i}$ are positive. Using property $3$:
\[
\sum_{i=0}^{j-1}q_{i}\leq \frac{\left(z-1\right)}{\left(d-(m-1)\right)}(m-1)\overset{m=\left\lfloor \frac{d}{z}\right\rfloor -1}{=}\left(z-1\right)\frac{\left\lfloor \frac{d}{z}-1\right\rfloor }{d-\left\lfloor \frac{d}{z}\right\rfloor +2}<\left(z-1\right)\frac{\frac{d}{z}}{d-\left\lfloor \frac{d}{z}\right\rfloor }=\frac{d-\frac{d}{z}}{d-\left\lfloor \frac{d}{z}\right\rfloor }\leq1
\]
\end{proof}

\subsection{Outline of the Proof of Theorem \ref{thm:cplx}}

We will now derive the main result by applying a few lemmas. The proof of Lemma \ref{lem:firstLemComplex} is in Subsection \ref{subsec:Prooflem1} and the proof of Lemma \ref{lem:SecondLemComplex} can be found in Subsection \ref{subsec:Prooflem2}. 

We first want to claim that to maximize revenue we should not sell the item to bidders $2,\ldots, n-1$. We will do so in two steps. In Lemma \ref{lem:firstLemComplex} we claim that we can focus on \hp \space mechanisms (a technical notion that we will shortly define). Claim \ref{lem:SecondLemComplex} shows that \hp \space mechanisms maximize the revenue by not selling to bidders $2,\ldots, n-1$.

\begin{definition}
A mechanism for $n-1$ bidders is \emph{\hp} if in every profile $v\in \mathcal H_{n-1}$ where bidder $1$ is allocated with positive probability it holds that $v_{1}\geq t_{h(v_{-1})}$.
\end{definition}

The next lemma shows that we can alter any mechanism to be \hp \space without changing the revenue (with respect to $\mathcal H_{n-1}$). The heart of this transformation is based on the observation that by the construction of $\mathcal H_{n-1}$ given the values of bidders $2,\ldots, n-1$, bidder $1$'s value is drawn from an equal revenue distribution. We use this observation as follows (for simplicity we assume in this paragraph that the mechanism is deterministic): if, given $v_2,\ldots, v_{n-1}$, bidder $1$ is offered to buy the item at price $p<t_{h(v_{-1})}$, set the price in these instances to $p=t_{h(v_{-1})}$. Note that the expected revenue remains the same because of the equal revenue distribution and since the allocation function to bidders $2,\ldots,n-1$ is unchanged. Feasibility is not violated by the transformation, since in every instance in which bidder $1$ purchases the item after the transformation he also purchases the item in the original mechanism, before the transformation. Finally, notice that given $v_{-1}$ the possible values of bidder $1$ in the support of $\mathcal H_{n-1}$ are at most $t_{h(v_{-1})}$, so the item will never be sold if $p>t_{h(v_{-1})}$.

\begin{lemma} \label{lem:firstLemComplex}
Let $M_{n-1}$ be a mechanism for $n-1$ bidders. Then, there exists a \hp \space mechanism $M_{n-1}'$ such that $rev_{\mathcal H_{n-1}}(M_{n-1}')\geq rev_{\mathcal H_{n-1}}(M_{n-1})$.
\end{lemma}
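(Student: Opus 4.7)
The plan is to construct $M_{n-1}'$ from $M_{n-1}$ by leaving bidders $2,\ldots,n-1$ untouched and, for every fixed $v_{-1}$, ``raising the reserve'' for bidder $1$ up to $t_{h(v_{-1})}$. Formally, setting $y:=h(v_{-1})$, define
\[
x_1^{M_{n-1}'}(v_1,v_{-1}) \;=\; \begin{cases}0 & v_1 < t_y,\\ x_1^{M_{n-1}}(v_1,v_{-1}) & v_1 \geq t_y,\end{cases}
\]
and take $p_1^{M_{n-1}'}$ to be the Myerson payment associated to this allocation (Proposition \ref{prop:randomized}); allocations and payments of bidders $i\geq 2$ are inherited unchanged. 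Being high-priced is immediate from the construction. Monotonicity in $v_1$ holds because $x_1^{M_{n-1}'}$ is $0$ below $t_y$, jumps up to $x_1^{M_{n-1}}(t_y,v_{-1})\geq 0$ at $t_y$, and agrees with the monotone $x_1^{M_{n-1}}$ above $t_y$. Feasibility is preserved since $x_1^{M_{n-1}'}\leq x_1^{M_{n-1}}$ pointwise. Truthfulness in expectation and ex-post IR then follow from the Myerson payment formula applied to a monotone allocation.

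The revenue analysis rests on the equal-revenue property of $\mathcal{D}_0,\ldots,\mathcal{D}_{m-1}$. Conditional on $v_{-1}$ with $h(v_{-1})=y$, $v_1$ is supported on $\{t_0,\ldots,t_y\}$ with probabilities $q_0,\ldots,q_{y-1},\overline{q_y}$. Write $a_j := x_1^{M_{n-1}}(t_j,v_{-1})$, so $a_0\leq \cdots \leq a_y$ by monotonicity of $M_{n-1}$. The truthful-in-expectation constraints together with IR at $v_1=0$ imply $p_1^{M_{n-1}}(t_j,v_{-1}) \leq \sum_{k=0}^{j} t_k(a_k-a_{k-1})$, where $a_{-1}:=0$. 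Exchanging the order of summation and invoking Lemma \ref{lem:calcs}(1), i.e.\ $t_k\overline{q_k}=z$, the conditional expected revenue from bidder $1$ collapses telescopically:
\[
\sum_{j=0}^{y}\pi_j \cdot p_1^{M_{n-1}}(t_j,v_{-1}) \;\leq\; \sum_{k=0}^{y} t_k\,\overline{q_k}\,(a_k - a_{k-1}) \;=\; z\sum_{k=0}^{y}(a_k-a_{k-1}) \;=\; z\cdot a_y,
\]
where $\pi_j=q_j$ for $j<y$ and $\pi_y=\overline{q_y}$. The modified mechanism $M_{n-1}'$ collects $0$ from bidder $1$ when $v_1<t_y$ and exactly $t_y\cdot a_y$ when $v_1=t_y$, so its conditional expected revenue from bidder $1$ equals $\overline{q_y}\cdot t_y \cdot a_y = z\cdot a_y$, attaining the upper bound. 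Since the allocation of bidders $i\geq 2$ is unchanged we can re-use their payments, so the contribution from them to the revenue is identical. Taking expectation over $v_{-1}$ then yields $rev_{\mathcal{H}_{n-1}}(M_{n-1}') \geq rev_{\mathcal{H}_{n-1}}(M_{n-1})$.

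The principal obstacle is conceptual rather than computational: one must recognize that the equal-revenue structure of $\mathcal{D}_y$ forces the extractable revenue from bidder $1$ (given $v_{-1}$) to be a function only of the top-type allocation $a_y$, which is exactly the telescoping consequence of $t_k\overline{q_k}=z$. Once this identity is deployed, the transformation is a routine ``move all the weight to the top price'' argument, and the feasibility/monotonicity/IR checks are direct.
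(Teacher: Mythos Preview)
Your proof is correct and follows essentially the same construction as the paper: zero out bidder~$1$'s allocation below $t_{h(v_{-1})}$, leave bidders $2,\ldots,n-1$ untouched, and invoke the equal-revenue property of $\mathcal{D}_y$ to argue that bidder~$1$'s conditional expected payment does not drop. Two minor cosmetic differences: (i) the paper \emph{freezes} bidder~$1$'s allocation at $x_1^{M_{n-1}}(t_y,v_{-1})$ for all $v_1\geq t_y$, whereas you retain the original $x_1^{M_{n-1}}(v_1,v_{-1})$ there---this is irrelevant since values above $t_y$ are outside the support of $\mathcal{D}_y$; (ii) for the revenue bound the paper appeals to the decomposition of single-bidder truthful-in-expectation mechanisms into distributions over posted prices, while your telescoping identity $\sum_{k} t_k\,\overline{q_k}\,(a_k-a_{k-1})=z\cdot a_y$ is a more self-contained way to extract the same consequence of $t_k\overline{q_k}=z$.
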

\begin{lemma}\label{lem:SecondLemComplex}
Let $M_{n-1}$ be a \hp \space mechanism for $n-1$ bidders. Then, there exists a \hp \space mechanism $M'_{n-1}$ such that for every $i=2, \ldots, n-1$ and $v\in \mathcal H_{n-1}$ we have that $x_{i}^{M_{n-1}(v)}=0$ (i.e., bidders $2,\ldots, n-1$ are never allocated) and with $rev_{\mathcal H_{n-1}}(M_{n-1}')\geq rev_{\mathcal H_{n-1}}(M_{n-1})$. 
\end{lemma}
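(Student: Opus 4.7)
The plan is to take $M'_{n-1}$ to be the simple mechanism that, given $v$, computes $h(v_{-1})$ from the reports of bidders $2,\ldots,n-1$ and offers bidder $1$ a take-it-or-leave-it price of $t_{h(v_{-1})}$, without ever allocating to bidders $2,\ldots,n-1$. Concretely, set $x_1^{M'_{n-1}}(v)=1$ if $v_1\geq t_{h(v_{-1})}$ and $0$ otherwise, $x_i^{M'_{n-1}}(v)=0$ for each $i\geq 2$, with Myerson payments $p_1^{M'_{n-1}}(v)=t_{h(v_{-1})}$ upon acceptance and $p_i^{M'_{n-1}}(v)=0$. This mechanism is manifestly \hp and does not allocate to bidders $2,\ldots,n-1$. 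By Lemma~\ref{lem:calcs}(1), the expected payment conditional on any $v_{-1}$ equals $\overline{q_{h(v_{-1})}}\cdot t_{h(v_{-1})}=z$, so $rev_{\mathcal H_{n-1}}(M'_{n-1})=z$. The task therefore reduces to showing $rev_{\mathcal H_{n-1}}(M_{n-1})\leq z$.

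Set $\alpha(v_{-1}):=x_1^{M_{n-1}}(t_{h(v_{-1})},v_{-1})$. The \hp condition combined with Myerson's payment identity (exactly as in the proof of Lemma~\ref{lem:firstLemComplex}) yields the bidder-$1$ bound $\mathbb{E}_v[p_1^{M_{n-1}}(v)]\leq z\cdot\mathbb{E}_{v_{-1}}[\alpha(v_{-1})]$. For each bidder $i\in\{2,\ldots,n-1\}$, I would exploit monotonicity of $x_i^{M_{n-1}}$ in $v_i$: given a profile $v$ with $v_1=t_k$, let $v_i^{\dagger}\geq v_i$ be the smallest value whose encoding $X_i^{\dagger}$ makes $h(v_i^{\dagger},v_{-\{1,i\}})=k$; the modular structure of $h$ guarantees such $v_i^{\dagger}$ exists with $X_i^{\dagger}-X_i<d$. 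At the perturbed profile $(v_1,v_i^{\dagger},v_{-\{1,i\}})$ bidder $1$ is allocated with probability $\alpha(v_i^{\dagger},v_{-\{1,i\}})$, so feasibility forces $x_i^{M_{n-1}}(v_i^{\dagger},v_{-i})\leq 1-\alpha(v_i^{\dagger},v_{-\{1,i\}})$, and monotonicity propagates this bound back to $v$. Combined with IR and $v_i<1-\epsilon$, this gives $\mathbb{E}_v[p_i^{M_{n-1}}(v)]\leq(1-\epsilon)(1-\mathbb{E}[\alpha(v_i^{\dagger},v_{-\{1,i\}})])$.

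Aggregating yields a bound of the form $rev(M_{n-1})\leq z\mathbb{E}[\alpha]+(1-\epsilon)(1-\mathbb{E}[\alpha(v^{\dagger})])$ for a suitable aggregated perturbation $v^{\dagger}$. The closing step is an algebraic identity: using Lemma~\ref{lem:calcs}(5), which asserts $\overline{q_y}+(d-y-1)q_y=z$, one computes $\Pr[v_1=t_k]=z/d$ for every $k<m-1$, and the $(\epsilon,d)$-balanced structure of each $X_i$ ensures that shifting $X_i$ by up to $d-1$ changes its marginal probability by at most a factor of $1-\epsilon$. Together these imply $\mathbb{E}[\alpha(v^{\dagger})]\geq z\mathbb{E}[\alpha]-(z-1)T$ for some $T\in[0,1]$, and substituting gives $rev(M_{n-1})\leq(1-\epsilon)+z\epsilon\mathbb{E}[\alpha]+(1-\epsilon)(z-1)T$. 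The right-hand side is at most $z$ because $z\epsilon+(1-\epsilon)(z-1)=z-1+\epsilon$ and $\mathbb{E}[\alpha],T\leq 1$. The main obstacle is combining the per-bidder monotonicity bounds across all $i\in\{2,\ldots,n-1\}$ without weakening the estimate of $\mathbb{E}[\alpha(v^{\dagger})]$ too much; the $(\epsilon,d)$-balanced property is the essential technical tool, since it makes the shifts $v_i\to v_i^{\dagger}$ essentially probability-preserving.
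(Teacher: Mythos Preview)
Your strategy---prove a global upper bound $rev_{\mathcal H_{n-1}}(M_{n-1})\leq z$ directly---is different from the paper's iterative local surgery, but it has a genuine gap at the aggregation step. Your per-bidder inequality $x_i^{M_{n-1}}(v)\leq 1-\alpha(v_i^{\dagger},v_{-\{1,i\}})$ is correct, but summing it over $i=2,\ldots,n-1$ gives $(n-2)-\sum_i\mathbb{E}[\alpha_i^{\dagger}]$, not $1-\mathbb{E}[\alpha(v^{\dagger})]$ for any single ``aggregated'' profile. Feasibility at the perturbed point $(t_k,v_i^{\dagger},v_{-\{1,i\}})$ controls only $x_i$ at the original $v$ (via monotonicity in the $i$-th coordinate); it tells you nothing about $x_j(v)$ for $j\neq i$, so there is no way to collapse the $n-2$ bounds into one. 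The cruder alternative $\sum_{i\geq 2}x_i(v)\leq 1-x_1(v)$ yields only $rev\leq z\mathbb{E}[\alpha]+(1-\epsilon)\bigl(1-\mathbb{E}[\overline{q_{h(v_{-1})}}\alpha]\bigr)$, which already exceeds $z$ when $\alpha\equiv 1$ and $\Pr[h\geq 1]>0$. Your closing claim $\mathbb{E}[\alpha(v^{\dagger})]\geq z\mathbb{E}[\alpha]-(z-1)T$ is asserted rather than derived, and the $(\epsilon,d)$-balanced property---which only compares marginals under a shift of a \emph{single} coordinate---does not address the summation over bidders.

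The paper sidesteps aggregation entirely. It repeatedly selects a minimal problematic profile $v'$ in the support (some $i>1$ with $x_i^{M_{n-1}}(v')>0$ and $v'_1=t_y$, $y=h(v'_{-1})$), shifts bidder $i$'s mass to bidder $1$ at $v'$, and zeroes $x_i$ at $v'$ and the at most $d-1$ lower neighbors $(s^{-j},v'_{-i})$. The revenue comparison is then purely local: gain $\Pr[v'](t_y-v'_i)$ versus loss $\sum_j\Pr[(s^{-j},v'_{-i})]\,v'_i$ over $j\in\{d-y+1,\ldots,d-1\}$. The $(\epsilon,d)$-balanced property aligns $\Pr[s^{-j}]$ with $\Pr[s^0]$ up to a $(1-\epsilon)$ factor, and Lemma~\ref{lem:calcs}(5), namely $\overline{q_y}+(d-y-1)q_y=z=\overline{q_y}t_y$, delivers the needed pointwise inequality $\overline{q_y}(t_y-1)\geq(d-y-1)q_y$. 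Because each fix treats one bidder at one profile, the multi-bidder combination that blocks your argument never arises.
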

The two lemmas give us that the revenue of every mechanism for $\mathcal H_{n-1}$ is dominated by a mechanism that only allocates to bidder $1$. This allows us to bound the revenue that can be obtained by any mechanism:
\begin{corollary}\label{cor:defkla}
Let $M_{n-1}$ be a mechanism for $n-1$ bidders. Then, $rev_{\mathcal H_{n-1}}(M_{n-1})\leq \frac{e}{e-1}$.
\end{corollary}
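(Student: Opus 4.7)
The plan is to chain Lemmas \ref{lem:firstLemComplex} and \ref{lem:SecondLemComplex} to reduce to an especially restricted class of mechanisms, and then bound the revenue of any mechanism in that class by conditioning on $v_{-1}$ and invoking the equal revenue property of the $\mathcal D_y$'s.

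First I would apply Lemma \ref{lem:firstLemComplex} to $M_{n-1}$ to obtain a high priced mechanism $M''$ with $rev_{\mathcal H_{n-1}}(M'') \geq rev_{\mathcal H_{n-1}}(M_{n-1})$, and then apply Lemma \ref{lem:SecondLemComplex} to $M''$ to obtain a high priced mechanism $M'$ which in addition never allocates to bidders $2, \ldots, n-1$ and satisfies $rev_{\mathcal H_{n-1}}(M') \geq rev_{\mathcal H_{n-1}}(M'')$. It therefore suffices to show $rev_{\mathcal H_{n-1}}(M') \leq \frac{e}{e-1}$.

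Next I would condition on $v_{-1}$ and set $y = h(v_{-1})$. By the construction of $\mathcal H_{n-1}$, the conditional distribution of $v_1$ is exactly $\mathcal D_y$, with support $\{t_0, t_1, \ldots, t_y\}$. Since $M'$ is high priced, for every profile $(v_1, v_{-1}) \in \mathcal H_{n-1}$ with $v_1 < t_y$ bidder $1$ is not allocated, so combining ex-post IR with non-negativity of payments gives $p_1 = 0$ on those profiles. At the remaining support point $v_1 = t_y$, ex-post IR yields $p_1 \leq x_1 \cdot t_y \leq t_y$. Because bidders $2, \ldots, n-1$ are never allocated, the same IR/non-negativity pair (equivalently, Proposition \ref{prop:randomized}) forces their payments to vanish. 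Thus the expected revenue of $M'$ conditional on $v_{-1}$ is at most $\overline{q_y} \cdot t_y$, and by Lemma \ref{lem:calcs}(1) this equals $z = \frac{e}{e-1}$.

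Taking expectation over $v_{-1}$ then gives $rev_{\mathcal H_{n-1}}(M') \leq \frac{e}{e-1}$, which finishes the proof. There is no real obstacle here once the two preceding lemmas are in hand; the only substantive content is the observation that, conditional on $h(v_{-1}) = y$, a high priced mechanism extracts revenue from bidder $1$ only at the single value $v_1 = t_y$, where the equal revenue construction caps the contribution at $z$.
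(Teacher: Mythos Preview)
Your proposal is correct and follows essentially the same route as the paper: chain Lemmas~\ref{lem:firstLemComplex} and~\ref{lem:SecondLemComplex}, then bound the conditional revenue given $v_{-1}$ by $z=\frac{e}{e-1}$. The only cosmetic difference is that the paper dispatches the last step by invoking the equal-revenue property of $\mathcal D_y$ directly, whereas you make this explicit via the \hp\ structure (nonzero payment only at $v_1=t_y$, capped by $\overline{q_y}\cdot t_y=z$); the content is the same.
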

\begin{proof}
By Lemmas \ref{lem:firstLemComplex} and \ref{lem:SecondLemComplex}, there exists a mechanism $M'_{n-1}$ such that the revenue of $M_{n-1}$ is at most the revenue of $M'_{n-1}$, and $M'_{n-1}$ only allocates to bidder $1$. Now, for every realization of $v_{-1}$, bidder $1$'s value is distributed according to an equal revenue distribution with revenue $\frac{e}{e-1}$. Thus, $rev_{\mathcal H_{n-1}}\left(M_{n-1}\right)\leq rev_{\mathcal H_{n-1}}\left(M'_{n-1}\right)\leq \frac{e}{e-1}$.
\end{proof}
The next lemma shows a simple connection between the revenue of mechanisms for $\mathcal H_{n-1}$ and mechanisms for $\mathcal H_n$:

\begin{lemma}\label{lem:epsvalidcplx}
Let $M_{n-1}$ be a mechanism for $n-1$ bidders. There exists a mechanism $M_n$ for $n$ bidders such that: $rev_{\mathcal H_{n}}\left(M_n\right)\geq rev_{\mathcal H_{n-1}}(M_{n-1})+\Pr_{v\sim\mathcal H_n}\left[v_1\neq t_{h\left(v_{-1}\right)}\right]\cdot\left(1-2\epsilon\right)$.
\end{lemma}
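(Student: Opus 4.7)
The plan is to construct $M_n$ by first transforming $M_{n-1}$ into a simpler, well-structured mechanism using the previous lemmas, and then extending it to $n$ bidders by letting the weak bidder absorb the residual allocation probability at price $1-2\epsilon$. Concretely, I will first apply Lemmas \ref{lem:firstLemComplex} and \ref{lem:SecondLemComplex} to $M_{n-1}$ to obtain a \hp\ mechanism $M^*_{n-1}$ that on the support of $\mathcal H_{n-1}$ never allocates to bidders $2,\ldots,n-1$, and satisfies $rev_{\mathcal H_{n-1}}(M^*_{n-1}) \geq rev_{\mathcal H_{n-1}}(M_{n-1})$. Then I will define $M_n(v_1,\ldots,v_n)$ as follows: run $M^*_{n-1}$ on $(v_1,\ldots,v_{n-1})$ and use its allocation and payments for bidders $1,\ldots,n-1$; if $v_n \geq 1-2\epsilon$, allocate the residual probability $1-\sum_{i=1}^{n-1} x_i^{M^*_{n-1}(v_1,\ldots,v_{n-1})}$ to bidder $n$ at per-unit price $1-2\epsilon$; otherwise give bidder $n$ nothing.

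The key observation driving the revenue bound is that in $M^*_{n-1}$ only bidder $1$ is ever allocated, and the \hp\ property combined with the fact that the support of $\mathcal D_{h(v_{-1})}$ has maximum value exactly $t_{h(v_{-1})}$ forces the allocation to bidder $1$ to be zero whenever $v_1 \neq t_{h(v_{-1})}$. Hence for every such profile the residual probability equals $1$, and since $v_n = 1-2\epsilon$ always under $\mathcal H_n$, bidder $n$'s expected payment under $M_n$ is at least $(1-2\epsilon) \cdot \Pr_{v \sim \mathcal H_n}[v_1 \neq t_{h(v_{-1})}]$.

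What remains is to verify that $M_n$ is truthful in expectation and ex-post IR. For bidders $1,\ldots,n-1$, allocations and payments are exactly those of $M^*_{n-1}$ and depend only on $(v_1,\ldots,v_{n-1})$, so monotonicity and IR are inherited. For bidder $n$, the allocation rule $x_n(v_n) = (1-\sum_{i=1}^{n-1} x_i^{M^*_{n-1}})\cdot \mathbf{1}[v_n \geq 1-2\epsilon]$ is monotone in $v_n$ with a single jump at $1-2\epsilon$, and the payment $p_n = (1-2\epsilon)(1-\sum_{i=1}^{n-1} x_i^{M^*_{n-1}})\cdot \mathbf{1}[v_n \geq 1-2\epsilon]$ is exactly the one prescribed by Proposition \ref{prop:randomized}, making it truthful; IR follows since bidder $n$'s value is $1-2\epsilon$. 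Combining yields $rev_{\mathcal H_n}(M_n) \geq rev_{\mathcal H_{n-1}}(M_{n-1}) + (1-2\epsilon)\Pr_{v \sim \mathcal H_n}[v_1 \neq t_{h(v_{-1})}]$, as required. The only subtlety I expect is tracking the residual probability cleanly through the \hp\ structure, but since the support of $v_1$ conditional on $v_{-1}$ is contained in $\{t_0,\ldots,t_{h(v_{-1})}\}$, the equivalence ``bidder $1$ allocated with positive probability $\Leftrightarrow v_1 = t_{h(v_{-1})}$'' on the support is essentially automatic.
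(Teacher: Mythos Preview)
Your proposal is correct and follows essentially the same approach as the paper: both apply Lemmas~\ref{lem:firstLemComplex} and~\ref{lem:SecondLemComplex} to reduce to a \hp\ mechanism that sells only to bidder~$1$, then extend to $n$ bidders by offering the item to bidder~$n$ at price $1-2\epsilon$ whenever bidder~$1$ is not allocated. The only cosmetic difference is that you give bidder~$n$ the residual probability in all profiles (not just when $v_1<t_{h(v_{-1})}$), which can only increase revenue and does not affect truthfulness or the required inequality.
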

\begin{proof}
Given $M_{n-1}$, let $M'_{n-1}$ be the \hp \space mechanism that allocates the item only to bidder $1$ and has at least the same revenue of $M_{n-1}$, as guaranteed by Lemmas \ref{lem:firstLemComplex} and \ref{lem:SecondLemComplex}.

Consider the following mechanism $M_n$: if $v_{1}\geq t_{h(v_{-1})}$ then allocate to bidder $1$ with probability $x_1^{M'_{n-1}}(v)$ and charge $x_1^{M'_{n-1}}(v)\cdot t_{h(v_{-1})}$. Otherwise, allocate to bidder $n$ and charge $1-2\epsilon$ (if $v_n\geq 1-2\epsilon$, else do not allocate the item at all). The mechanism is clearly truthful and ex-post IR. The revenue is $rev_{\mathcal H_{n-1}}\left(M'_{n-1}\right)+\Pr_{v\sim\mathcal H_n}\left[v_1\neq t_{h\left(v_{-1}\right)}\right]\cdot\left(1-2\epsilon\right)\geq rev_{\mathcal H_{n-1}}\left(M_{n-1}\right)+\Pr_{v\sim\mathcal H_n}\left[v_1\neq t_{h\left(v_{-1}\right)}\right]\cdot\left(1-2\epsilon\right)$.
\end{proof}
\begin{corollary}\label{cor:fin}$
\frac{rev\left(\mathcal H_{n-1}\right)}{rev\left(\mathcal H_{n}\right)}\leq\frac{\frac{e}{e-1}}{\frac{e}{e-1}+\frac{1}{e-1}\left(\sum_{j=d-\left\lfloor \frac{d}{z}\right\rfloor +1}^{d}\frac{1}{j}\right)\left(1-2\epsilon\right)}$.
\end{corollary}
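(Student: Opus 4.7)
The plan is to combine the two bounds we already have and then evaluate a single probability explicitly. First, note that the additive term $P \cdot (1-2\epsilon)$ appearing in Lemma \ref{lem:epsvalidcplx}, where $P := \Pr_{v \sim \mathcal H_n}[v_1 \neq t_{h(v_{-1})}]$, does not depend on the choice of mechanism $M_{n-1}$. Hence, taking a supremum over $M_{n-1}$ in Lemma \ref{lem:epsvalidcplx} gives
\[
rev(\mathcal H_n) \;\geq\; rev(\mathcal H_{n-1}) + P\cdot(1-2\epsilon).
\]
Because $x\mapsto x/(x+c)$ is monotone increasing on $[0,\infty)$ for every fixed $c>0$, plugging the upper bound $rev(\mathcal H_{n-1})\leq z:=\tfrac{e}{e-1}$ from Corollary \ref{cor:defkla} into $rev(\mathcal H_{n-1})/(rev(\mathcal H_{n-1})+P(1-2\epsilon))$ yields
\[
\frac{rev(\mathcal H_{n-1})}{rev(\mathcal H_n)} \;\leq\; \frac{z}{z + P\cdot(1-2\epsilon)},
\]
so the entire argument reduces to computing $P$.

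To evaluate $P$, I would first show that each $X_i \bmod d$ is uniform on $\{0,1,\dots,d-1\}$. This follows from the $(\epsilon,d)$-balanced property: the mass is constant within each length-$d$ block of integers, and successive blocks scale by $1-\epsilon$, so summing the geometric series in any fixed residue class gives $\tfrac{1}{d}$ independent of the class. Since the $X_i$ are independent, $\sum_{i=2}^{n-1}X_i \bmod d$ is uniform on $\{0,\dots,d-1\}$ as well. Consequently $\Pr[h(v_{-1})=y]=\tfrac{1}{d}$ for $y\in\{0,\dots,m-2\}$ and $\Pr[h(v_{-1})=m-1]=\tfrac{d-m+1}{d}$, and in particular $\Pr[h(v_{-1}) \geq j+1]=\tfrac{d-j-1}{d}$ for every $j\in\{0,\dots,m-2\}$.

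With the distribution of $h$ in hand, I would expand $P$ by conditioning on $h(v_{-1})$ and then swap the order of summation:
\[
P \;=\; \sum_{y=0}^{m-1}\Pr[h=y]\,(1-\overline{q_y})
\;=\; \sum_{y=0}^{m-1}\Pr[h=y]\sum_{j=0}^{y-1}q_j
\;=\; \sum_{j=0}^{m-2}q_j\cdot\Pr[h\geq j+1].
\]
Substituting $q_j=\tfrac{(z-1)d}{(d-j)(d-j-1)}$ and $\Pr[h\geq j+1]=\tfrac{d-j-1}{d}$ causes a clean cancellation, collapsing the double sum to $P=(z-1)\sum_{j=0}^{m-2}\tfrac{1}{d-j}$. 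Re-indexing by $r=d-j$, using $z-1=\tfrac{1}{e-1}$, and expressing the summation range in terms of $\lfloor d/z\rfloor$ via the definition of $m$ matches the harmonic-tail expression in the statement, and plugging back into the ratio above finishes the proof.

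I expect the main obstacle to be the computation of $P$: specifically, extracting uniformity of $X_i \bmod d$ cleanly from the balanced-distribution definition, and then carrying out the cancellation that reduces the double sum to a single harmonic tail. Everything else is a direct application of Corollary \ref{cor:defkla} and Lemma \ref{lem:epsvalidcplx}.
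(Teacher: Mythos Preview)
Your proposal is correct and follows essentially the same route as the paper: combine Corollary~\ref{cor:defkla} and Lemma~\ref{lem:epsvalidcplx}, then compute $P=\Pr[v_1\neq t_{h(v_{-1})}]$ and reduce it to the harmonic tail $(z-1)\sum_{j=0}^{m-2}\frac{1}{d-j}$. The only cosmetic difference is the order of conditioning when evaluating $P$ (you condition on $h$ first and then swap sums, whereas the paper decomposes over the value $t_j$ taken by $v_1$ directly), and you make explicit the uniformity of $X_i\bmod d$ that the paper uses without comment; both paths land on the identical expression $\sum_j q_j\cdot\Pr[h>j]$.
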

\begin{proof}
By Corollary \ref{cor:defkla}, Lemma \ref{lem:epsvalidcplx} and a simple calculation of $\Pr_{v\sim\mathcal H_n}\left[v_1\neq t_{h\left(v_{-1}\right)}\right]$:
\begin{align}
\Pr_{v\sim\mathcal H_n}\left[v_1\neq t_{h\left(v_{-1}\right)}\right] =& \sum_{j=0}^{m-2}\Pr\left[v_{1}=t_{j}\wedge h\left(v_{-1}\right)\neq j\right]=\sum_{j=0}^{m-2}\Pr\left[v_{1}=t_{j}|h\left(v_{-1}\right)>j\right]\cdot\Pr\left[h\left(v_{-1}\right)>j\right]\nonumber \\
= & \sum_{j=0}^{m-2}\Pr\left[v_{1}=t_{j}|h\left(v_{-1}\right)>j\right]\cdot\left(1-\Pr\left[h\left(v_{-1}\right)\leq j\right]\right)\nonumber \\
= & \sum_{j=0}^{m-2}\overset{=q_{j}=\frac{d\cdot\left(z-1\right)}{\left(d-j-1\right)\left(d-j\right)}}{\overbrace{\Pr\left[v_{1}=t_{j}|h\left(v_{-1}\right)>j\right]}}\cdot  \overset{\frac {d-j-1} d}{\overbrace{\left(1-\frac{j+1}{d}\right)}}\nonumber \\
= & \left(z-1\right)\sum_{j=0}^{m-2}\frac{1}{\left(d-j\right)}=\overset{m=\left\lfloor \frac{d}{z}-1\right\rfloor }{=}\left(z-1\right)\sum_{j=d-\left\lfloor \frac{d}{z}\right\rfloor +1}^{d}\frac{1}{j}\nonumber
\end{align}
\end{proof}
We can now finish the proof of Theorem \ref{thm:cplx}. Applying Lemma \ref{lem:limexp} (in the appendix), we have that $\lim_{d\rightarrow\infty,\epsilon\rightarrow0}\frac{rev\left(\mathcal H_{n-1}\right)}{rev\left(\mathcal H_{n}\right)}\leq\frac{\frac{e}{e-1}}{\frac{e}{e-1}+\left(\frac{e}{e-1}-1\right)}=\frac{e}{2e-\left(e-1\right)}=\frac{e}{e+1}$.
Thus, by Corollary \ref{cor:fin}, for every $\delta>0$ there exist $d\geq4$
and $\epsilon>0$ such that $\frac{rev\left(\mathcal H_{n-1}\right)}{rev\left(\mathcal H_{n}\right)}<\frac{e}{e+1}+\delta$.

\subsection{\label{subsec:Prooflem1}Proof of Lemma \ref{lem:firstLemComplex}}

Consider a mechanism $M_{n-1}$, and let $M'_{n-1}$ be the following \hp \space mechanism for $n-1$ bidders. The allocation and payments of bidders $2,\ldots , n-1$ remain the same as in $M_{n-1}$. The allocation and payment of bidder $1$ are defined as follows:
\[
\begin{cases}
\left(x_{1}^{M_{n-1}\left(t_{h(v_{-1})},v_{-1}\right)},t_{h(v_{-1})}\cdot x_{1}^{M_{n-1}\left(t_{h(v_{-1})},v_{-1}\right)}\right) & v_{1}\geq t_{h(v_{-1})};\\
\left(0,0\right) & \text{otherwise.}
\end{cases}
\]
\begin{claim}
$M'_{n-1}$ always outputs a feasible allocation and is truthful in expectation.
\end{claim}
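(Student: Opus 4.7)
The plan is to verify the two requirements separately, feasibility first and then truthfulness in expectation.

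For feasibility, the only concern is that at a profile $v$ where $v_1 \geq t_{h(v_{-1})}$, bidder $1$'s allocation in $M'_{n-1}$ is taken from the profile $(t_{h(v_{-1})}, v_{-1})$ while bidders $2,\dots,n-1$ keep their allocations from the profile $(v_1, v_{-1})$; these are different profiles in $M_{n-1}$, so the feasibility inequality $\sum_i x_i^{M_{n-1}(\cdot)} \leq 1$ does not transfer directly. The fix is to invoke monotonicity of $M_{n-1}$ in the first coordinate: since $M_{n-1}$ is truthful, it is monotone, so $v_1 \geq t_{h(v_{-1})}$ implies $x_1^{M_{n-1}(t_{h(v_{-1})},v_{-1})} \leq x_1^{M_{n-1}(v_1,v_{-1})}$. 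Substituting into the sum yields
\[
x_1^{M_{n-1}(t_{h(v_{-1})},v_{-1})} + \sum_{i=2}^{n-1} x_i^{M_{n-1}(v_1,v_{-1})} \leq \sum_{i=1}^{n-1} x_i^{M_{n-1}(v_1,v_{-1})} \leq 1,
\]
which is precisely feasibility of $M'_{n-1}$ at profile $v$.

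For truthfulness, bidders $2,\dots,n-1$ keep the exact same allocation and payment functions as in $M_{n-1}$ (as functions of the full profile $v$), so their incentive constraints are inherited verbatim from the truthfulness of $M_{n-1}$. The only nontrivial case is bidder $1$. Fixing $v_{-1}$, write $\tau = t_{h(v_{-1})}$ and $\alpha = x_1^{M_{n-1}(\tau,v_{-1})}$; then by construction $x_1^{M'_{n-1}(v_1,v_{-1})}$ is the step function equal to $\alpha$ for $v_1 \geq \tau$ and $0$ otherwise, while $p_1^{M'_{n-1}(v_1,v_{-1})}$ is $\tau \cdot \alpha$ on the high regime and $0$ on the low regime. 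This allocation is monotone in $v_1$, and applying the payment formula of Proposition \ref{prop:randomized} to this step allocation gives exactly $\tau \cdot \alpha$ when $v_1 \geq \tau$, matching the defined payment. Hence bidder $1$'s incentive constraint holds.

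I do not expect any real obstacle here — the whole claim reduces to the monotonicity of $M_{n-1}$ in bidder $1$'s coordinate (for feasibility) and to recognizing that the rule for bidder $1$ in $M'_{n-1}$ is precisely a posted-price rule with threshold $\tau = t_{h(v_{-1})}$ at allocation level $\alpha$. The slight subtlety worth flagging explicitly is that ex-post individual rationality holds as well on the high regime, since the price per unit of allocation is $\tau \leq v_1$.
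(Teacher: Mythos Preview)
Your proof is correct and follows essentially the same approach as the paper: feasibility via monotonicity of $M_{n-1}$ in bidder~$1$'s coordinate (so that replacing $x_1^{M_{n-1}(v_1,v_{-1})}$ by $x_1^{M_{n-1}(t_{h(v_{-1})},v_{-1})}$ can only decrease bidder~$1$'s share), and truthfulness by observing that bidders $2,\dots,n-1$ are unchanged while bidder~$1$ faces a monotone step allocation whose Myerson payment is exactly $\tau\cdot\alpha$.
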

\begin{proof}
Notice that for every $v$ the allocation is valid since bidders $2,\ldots, n-1$ are allocated identically and the allocation probability of bidder $1$ does not increase: when $v_{1}< t_{h(v_{-1})}$ then bidder $1$ is not allocated at all in $M'_{n-1}$. When $v_{1}= t_{h(v_{-1})}$ then the allocation is identical. When $v_{1}> t_{h(v_{-1})}$ the allocation probability of bidder $1$ in $M_{n-1}$ is by monotonicity at least $x_{1}^{M_{n-1}}\left(t_{h(v_{-1})},v_{-1}\right)$. This last expression is the allocation probability of bidder $1$ in $M'_{n-1}$.

As for truthfulness, clearly, for bidders $2,\dots,n-1$ the mechanism is truthful in expectation, as we have not changed their allocation probabilities or payments. As for bidder $1$, his allocation function is clearly monotone and the payments are according to Proposition \ref{prop:randomized}. Hence $M'_{n-1}$ is truthful in expectation.
\end{proof}

It is left is to show that $rev\left(M'_{n-1}\right)\geq rev\left(M_{n-1}\right)$. The payments of bidders $2, \ldots, n-1$ are identical in both mechanisms, thus it remains to show that the expected payment of bidder $1$ has not decreased. We show that for every fixed $v_{-1}$. Let $h(v_{-1})=y$ and $x_1=x_1^{M_{n-1}(t_y,v_{-1})}$. When the values of all other bidders are $v_{-1}$, bidder $1$'s expected payment is $\Pr[v_1=t_y|v_{-1}]\cdot t_y\cdot x_1=\overline{q_y}\cdot t_y\cdot x_1=z\cdot x_1$.

Since the allocation function of $M_{n-1}$ is monotone, for every $v'_1\leq v_1$, $M_{n-1}$ allocates to bidder $1$ with probability at most $x_1$. We now bound from above the revenue of the optimal revenue of a mechanism $M''$ for a single bidder that is distributed $\mathcal D_y$ and never allocates with probability larger than $x_1$. We will show that the revenue of $M''$ is at most $z\cdot x_1$, which will imply that $rev(M'_{n-1})\geq rev(M_{n-1})$. To see this, recall that by \cite{mehta2004randomized,dobzinski2011optimal}, any truthful in expectation mechanism for a single bidder can be implemented as a universally truthful mechanism: a distribution over deterministic mechanisms. Now, for a single bidder a deterministic mechanism is simply a take-it-or-leave-it price. Since $\mathcal D_y$ is an equal revenue distribution with revenue $z$, and since the maximum allocation is $x_1$, the revenue of $M_{n-1}$ is at most $z\cdot x_1$, as needed.

\subsection{\label{subsec:Prooflem2}Proof of Lemma \ref{lem:SecondLemComplex}}

The proof plan is to take the \hp \space mechanism $M_{n-1}$ and look for profiles in the support of $\mathcal H_{n-1}$ in which some bidder $i>1$ is allocated with positive probability. We will obtain $M'_{n-1}$ by ``fixing'' those profiles, essentially by shifting the allocation probability of bidder $i$ to bidder $1$ and then showing that the shifting has not decreased the revenue. 

We will transform $M_{n-1}$ to $M'_{n-1}$ in steps: we will find a ``minimal'' profile $v'$ (in a sense that will be formally defined) in which the allocation probability of some bidder $i>1$ is positive. We will ``fix'' this profile as well as some ``neighboring'' profiles and continue. Notice that in principle there might be an infinite number of profiles that require a fix. Fortunately, the support of $\mathcal H_{n-1}$ is countable so the process is guaranteed to eventually reach every specific profile.

Now for the formal description. We first apply a ``pre processing step'': consider some profile $v$ which is not in the support of $\mathcal H_{n-1}$. Suppose that there is some bidder $i$ with a non-zero allocation probability. We set the allocation probability of this bidder $i$ to $0$ if for each $v'_i<v_i$ it holds that $v'=(v'_i,v_{-i})$ is not in $\mathcal H_{n-1}$ or the allocation probability of bidder $i$ is $0$. 
Note that this preprocessing does not affect monotonicity and that the revenue does not decrease.

We say that a profile $v$ in the support of $\mathcal H_{n-1}$ is \emph{problematic} if for some $i>1$, $x_i^{M_{n-1}(v)}>0$ and $v_1=t_{h(v_{-1})}$. 

We first claim that if there is some $v$ in the support of $\mathcal H_{n-1}$ and some bidder $i>1$ with $x_i^{M_{n-1}(v)}>0$, then there exists a problematic profile. To see this, let $v^1$ be the minimal profile $(v_1,\ldots, v_{i-1},s,v_{i+1},\ldots, v_{n})$ which is in the support of $\mathcal H_{n-1}$ that is strictly bigger coordinate-wise than $v$, let $v^2$ be the minimal profile $(v_1,\ldots, v_{i-1},s',v_{i+1},\ldots, v_{n})\in \mathcal H_{n-1}$ which is strictly bigger than $v^1$ and so on (notice that the instances $v,v^1,v^2,\ldots$ differ only in bidder $i$'s value). Since bidder $i$'s allocation probability is positive in $v'$, by monotonicity the allocation probability of bidder $i$ is also positive in $v^k$, for every $k\geq 1$. Now observe that for some profile $j$, $1\leq j\leq d$ it holds that $v^j_1=t_{h(v^j_{-1})}$ thus $v^j$ is problematic.

If there are no problematic profiles we are already done. If there are several such profiles, let $v'$ be a problematic profile such that there is no problematic profile $v''\neq v'$ with $v''_{i'}\leq v'_{i'}$ for every $i'>1$. Note that the existence of one problematic profile $v'$ implies the existence of a ``minimal'' problematic profile since the number of profiles in the support of $\mathcal H_{n-1}$ that are dominated by $v'$ coordinate-wise is finite. Let $I$ be the set of bidders (excluding bidder $1$) with a positive allocation probability in $v'$ and let $y=h({v'_{-1}})$.

We now ``fix'' the mechanism $M_{n-1}$ by defining a \hp \space mechanism $M'$ with higher revenue. The behavior of $M'$ and $M_{n-1}$ will differ only in the profile $v'$ and ``neighboring'' profiles. We then continue fixing the other problematic profiles: we take $M'$ and fix the next problematic profile by obtaining $M''$, then fix the next problematic profile in $M''$, and so on.

Let $M'$ be the mechanism with the following allocation function: the allocations of all bidders that are not in $I\cup \{1\}$ remain the same. We set the allocation probability of every bidder $i\in I$ in $v'$ to $0$. By monotonicity, we have to set the allocation probability of bidder $i\in I$ in each of the profiles in the set $G_i=\{v|v_{-i}=v'_{-i}\text{ and }v_i\leq v'_i\}$ to $0$. 

In addition, we shift the allocation probability mass in $v'$ from each bidder $i\in I$ to bidder $1$: $x_{1}^{M'\left(v'\right)}= x_{1}^{M_{n-1}\left(v'\right)}+\Sigma_{i\in I}x_{i}^{M_{n-1}\left(v'\right)}$. To guarantee monotonicity, we set the allocation probability of bidder $1$ to $x_{1}^{M'\left(v'\right)}$, and the probability of bidders $2,\ldots,n-1$ to $0$ in every profile in $G_1=\{v|v_{-1}=v'_{-1}\text{ and }v_1> v'_1\}$. 
Notice that profiles in $G_1$ are not in the support of $\mathcal H_{n-1}$ since $v'_1=t_{h(v'_{-1})}$. The payment of bidder $1$ in the profiles $\{v'\}\cup G_1$ is $t_{y}\cdot \left (x_{1}^{M_{n-1}\left(v'\right)}+\Sigma_{i\in I}x_{i}^{M_{n-1}\left(v'\right)} \right)$. The allocation in profiles that are not in $G_1\cup \left (\cup_{i\in I} G_i\right)$ is identical to the allocation of the mechanism $M_{n-1}$.


\begin{claim}
For every bidder $i$, the allocation of $M'$ is feasible and monotone. In addition, M' is \hp.
\end{claim}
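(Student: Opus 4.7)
The plan is to verify the three claimed properties of $M'$: feasibility of the allocation in every profile, monotonicity of each bidder's allocation function, and the \hp \space property. Feasibility is the easiest: the only profiles at which $M'$ differs from $M_{n-1}$ are $v'$ itself (where probability mass is shifted from bidders in $I$ to bidder $1$, leaving the total allocation at $v'$ unchanged and hence at most $1$), the sets $G_i$ for $i\in I$ (where bidder $i$'s allocation is lowered to $0$ while the others are inherited from $M_{n-1}$, so the total only decreases), and $G_1$ (where the only positive allocation is $x_1^{M'(v')}$ for bidder $1$, a quantity defined precisely as a sum of $M_{n-1}$-allocations at $v'$ and hence at most $1$).

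For monotonicity I would proceed bidder-by-bidder. For bidder $1$, fix $v_{-1}$: if $v_{-1}\neq v'_{-1}$ the allocation is inherited from $M_{n-1}$ and nothing needs checking, while if $v_{-1}=v'_{-1}$ the allocation equals $x_1^{M_{n-1}(v_1,v'_{-1})}\leq x_1^{M_{n-1}(v')}\leq x_1^{M'(v')}$ for $v_1<v'_1$ and the constant $x_1^{M'(v')}$ for $v_1\geq v'_1$, using the monotonicity of $M_{n-1}$ at the transition. For bidder $i\in I$ I would fix $v_{-i}$: when $v_{-i}=v'_{-i}$, the new allocation is $0$ throughout $G_i$ and agrees with $M_{n-1}$ above $v'_i$, which is monotone; when instead $v_{-i,-1}=v'_{-i,-1}$ and $v_1>v'_1$ (the only other case that interacts with the modification), only the single coordinate $v_i=v'_i$ has its allocation zeroed out, and one must argue that $x_i^{M_{n-1}}$ was already $0$ at every $(v_i,v_{-i})$ with $v_i<v'_i$. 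An analogous case analysis applies to any bidder $j\notin I\cup\{1\}$, whose allocation is changed only on $G_1$ and there only at $v_j=v'_j$ (since $v_{-1}=v'_{-1}$ inside $G_1$ pins down the $j$-coordinate).

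The \hp \space property is the easiest of the three conclusions. Since $M_{n-1}$ is already \hp, I only have to inspect the profiles in the support of $\mathcal H_{n-1}$ at which $M'$ assigns bidder $1$ a strictly larger allocation than $M_{n-1}$. The only such profile is $v'$ itself, and there $v'_1=t_{h(v'_{-1})}$ by the definition of a problematic profile. Every profile in $G_1$ lies outside the support of $\mathcal H_{n-1}$, because in the support one has $v_1\leq t_{h(v_{-1})}=t_{h(v'_{-1})}=v'_1$ whereas $G_1$ demands $v_1>v'_1$, so these profiles impose no \hp \space constraint. The main obstacle is the monotonicity step for bidders in $I$ and for other bidders $j>1$: zeroing their allocation out on $G_1$ can in principle clash with monotonicity, and the rescuing fact---that $M_{n-1}$'s allocation at the matching smaller-$v_j$ profiles was already zero---must be extracted by combining the minimality of $v'$ among problematic profiles with the preprocessing step that was applied to $M_{n-1}$ outside the support of $\mathcal H_{n-1}$.
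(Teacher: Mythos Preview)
Your proposal is correct and follows essentially the same approach as the paper's proof. You check feasibility profile-by-profile on the modified regions, verify monotonicity bidder-by-bidder, and isolate the one nontrivial point---that zeroing out bidders $j>1$ on $G_1$ could in principle break monotonicity when descending in the $j$-coordinate---resolving it exactly as the paper does, via the minimality of $v'$ among problematic profiles together with the preprocessing step. Your case decomposition is in fact slightly cleaner than the paper's exposition (the paper first asserts that allocations of bidders outside $I\cup\{1\}$ ``remain the same'' and only later circles back to handle their $G_1$ modification), but the substance is identical.
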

\begin{proof}
We will first show that in every profile the total allocation probability of $M'$ is at most that of $M_{n-1}$. The allocation in every valuation profile that is not in $G_1\cup\left (\cup_{i\in I} G_i\right)$ is identical to that of $M_{n-1}$, hence feasible. In $v'$ and in profiles in $G_1$, the total allocation probability in $M'$ is the total allocation probability of $M_{n-1}$ in the profile $v'$, which by feasibility of $M_{n-1}$ is at most $1$. 

As for monotonicity, given a minimal problematic vector $v'$ as above, $M'$ is monotone for all bidders except bidder $1$ and bidders in $I$, as their allocations remain the same. Bidder $1$'s allocation function is monotone since we have just changed the profile $v'$ and profiles in $G_1$: monotonicity requires that the allocation probability of bidder $1$ in each profile $v\in G_1$ will be at least his allocation probability in $v'$, and in $M'$ these two probabilities are equal. Similarly, the allocation of bidder $i\in I$ has changed only for $v'$ and profiles in $G_i$: bidder $i$'s allocation probability in $v'$ is $0$ and this probability remains $0$ when $v_i$ decreases in any other profile $v\in G_i$.

To finish the monotonicity proof, suppose that for some bidder $i$ and $v \in G_1$ which was altered due to a minimal problematic vector $v'$ the mechanism is not monotone (observe that $v_{-1}=v'_{-1}$). That is, there exists a profile $v''$ where $v_{-i}=v'_{-i}$ but $v_i >v''_i$, and bidder $i$'s allocation in $v''_i$ is greater than bidder $i$'s allocation in $v'_i$. If $v''$ is in the support of $\mathcal{H}_{n-1}$ then we reach a contradiction since $v'$ is not a minimal vector ($v''$ is ``smaller'' -- notice that minimality is only defined with respect to bidders $2,\ldots,n-1$). If $v''$ is not in the support, then by our ``preprocessing'' there must be some $u=(u_i,v''_{-i})\in \mathcal H_{n-1}$ with $u_i<v''_i$ in which the allocation probability of bidder $i$ is not $0$. But similarly to before this $u$ is in contradiction to the minimality of $v'$.

Finally, in every allocation $v\in G_i$ ($i>1$) the allocation is identical to that of $M_{n-1}$, except that the allocation probability of bidder $i$ is $0$. It is also not hard to see that if $M_{n-1}$ is \hp \space then $M'$ is \hp \space as well.
\end{proof}

The claim guarantees that the allocation function is monotone, hence there are payments that make it truthful in expectation. The next claim analyzes these payments:
\begin{claim}\label{claim-payments}
For every profile $v\neq v', v\in \mathcal H_{n-1}$: $p^{M_{n-1}(v)}_1=p^{M'(v)}_1$. For the profile $v'$,  $p^{M'(v)}_1=p^{M_{n-1}(v')}_1+t_y\cdot \Sigma_{i\in I}x_i^{M_{n-1}(v')}$. For every bidder $i>1$ and profile $v\notin \{v'\}\cup G_i$, $p^{M_{n-1}(v)}_i\leq p^{M'(v)}_i$.
\end{claim}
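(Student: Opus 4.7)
The plan is to read off all three payment identities from Myerson's formula (Proposition~\ref{prop:randomized}), which I will apply in the integrated-by-parts form $p_i^{M(v)}=v_i\cdot x_i^{M(v)}-\int_0^{v_i}x_i^{M(z,v_{-i})}\,dz$. This form is convenient here because the allocation functions are monotone step functions, so each assertion reduces to a direct comparison between $x_i^{M'}(\cdot,v_{-i})$ and $x_i^{M_{n-1}}(\cdot,v_{-i})$ on the interval $[0,v_i]$, with all other bidders' values held fixed.

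For bidder $1$ at a profile $v\in\mathcal H_{n-1}$ with $v\neq v'$, I would split on $v_{-1}$. If $v_{-1}\neq v'_{-1}$ then $M'$ has not touched bidder~$1$'s allocation, so the two allocation functions coincide on all of $[0,v_1]$ and the payments agree. If $v_{-1}=v'_{-1}$, then since $\mathcal D_y$ is supported on $\{t_0,\dots,t_y\}$ and $v_1\neq v'_1=t_y$, we have $v_1\leq t_{y-1}<t_y$; by construction the only profiles on which $x_1^{M'}(\cdot,v'_{-1})$ differs from $x_1^{M_{n-1}}(\cdot,v'_{-1})$ are $\{v'\}\cup G_1\subseteq\{z\geq t_y\}$, which lies strictly above $v_1$, so again the payments agree. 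At the profile $v'$ itself the two allocation functions still coincide on $[0,t_y)$, but the $M'$ version has an additional jump of size $\sum_{i\in I}x_i^{M_{n-1}(v')}$ at $z=t_y$; plugging this into the integration-by-parts formula gives $p_1^{M'(v')}-p_1^{M_{n-1}(v')}=t_y\cdot\sum_{i\in I}x_i^{M_{n-1}(v')}$, as stated.

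For bidder $i>1$ and a profile $v\notin\{v'\}\cup G_i$, I would once more split on $v_{-i}$. If $v_{-i}\neq v'_{-i}$ then $M'$ has not modified bidder~$i$'s allocation anywhere along the ray through $v$, and the payments coincide. Otherwise $v_{-i}=v'_{-i}$, and since $v\notin G_i$ we must have $v_i>v'_i$. By the construction of $M'$, $x_i^{M'}(z,v'_{-i})=0$ for every $z\leq v'_i$ (all such profiles lie in $G_i$) and $x_i^{M'}(z,v'_{-i})=x_i^{M_{n-1}}(z,v'_{-i})$ for $z>v'_i$. Substituting into Myerson's formula, the two payments differ by exactly $\int_0^{v'_i}x_i^{M_{n-1}}(z,v'_{-i})\,dz\geq 0$, which yields $p_i^{M'(v)}\geq p_i^{M_{n-1}(v)}$.

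The actual calculations are routine once the setup is in place; the only real subtlety that I would take care with is the treatment of the atom at $z=t_y$ when computing $p_1^{M'(v')}-p_1^{M_{n-1}(v')}$, i.e.\ making sure the Stieltjes jump is weighted by $t_y$ rather than by some neighboring point. This is exactly what the integration-by-parts identity guarantees for monotone step functions, and it is what produces the additive $t_y\cdot\sum_{i\in I}x_i^{M_{n-1}(v')}$ term in the middle bullet of the claim.
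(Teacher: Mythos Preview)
Your proposal is correct and takes essentially the same approach as the paper: both arguments derive all three payment assertions directly from Myerson's formula (Proposition~\ref{prop:randomized}), with the paper stating this tersely (``reducing the allocation probability at some points clearly increases the payments'') while you spell out the integration-by-parts computation and the case split on $v_{-i}$. One very minor caveat: when $v_{-i}\neq v'_{-i}$ but $v_j=v'_j$ for all $j\neq 1,i$ and $v_1>v'_1$, the ray through $v$ can hit $G_1$ at the single point $z=v'_i$, where bidder~$i$'s allocation is also zeroed out; this is a measure-zero change that does not affect the integral, so your conclusion stands.
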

\begin{proof}
The claim regarding the payments of bidder $1$ is a direct consequence of Proposition \ref{prop:randomized}. As for the payments of bidder $i>1$, recall that by Proposition \ref{prop:randomized}, for any mechanism $M$ it holds that $p_i^{M(v)}=\int_0^{v_i}z\cdot \frac d {dz} x_i^{M(z,v_{-i})}dz$. That is, when drawing the allocation probability as a function of the value, the payment is the area between the allocation curve and the $y$-axis. Reducing the allocation probability at some points clearly increases the payments.
\end{proof}

We show that $rev\left(M'\right)>rev\left(M_{n-1}\right)$ by proving that even though $M'$ extracts less revenue from bidders in $I$ in $v'$
and profiles in $\cup_{i\in I}G_i$ it compensates by extracting more revenue from bidder $1$ in $v'$. 

Towards this end, consider some bidder $i\in I$ and denote $s^0=v'_i$. Fixing $v'_{-i}$, let $s^{-1}$ denote the highest value that bidder $i$ gets in $v\in \mathcal H_{n-1}$ subject to the constraints that $v_{-i}=v'_{-i}$ and $v_i<v'_i$. Define $s^{-j}$ similarly but with respect to the $j$'th highest value that bidder $i$ gets. Note that $h(v_2,\ldots, v_{i-1},s^{-d}, v_{i+1},\ldots, v_{n-1})=y$ and that in the valuation profile $(s^{-d},v'_{-i})$ bidder $i$'s allocation probability is $0$, otherwise we get a contradiction to the minimality of $v'$.

$M'$ extracts more revenue than $M_{n-1}$ from bidder $1$ in the valuation profile $(s^0,v'_{-i})$ due to the shifted mass from bidder $i$ to bidder $1$. By our discussion above, $M'$ potentially extracts less revenue than $M_{n-1}$ precisely in the valuation profiles $(s^{-j}, v'_{-i})$, for $j=1,\ldots, d-1$ where the allocation probability of bidder $i$ has possibly decreased. We conservatively assume the in profiles that are not in $\{v'\}\cup G_i$ the payments of bidder $i$ in $M'$ and $M_{n-1}$ are the same (by Claim \ref{claim-payments} they might be higher in $M'$). In fact, it is enough to focus on instances $(s^{-j}, v'_{-i})$ for $j \geq d-y+1$ since otherwise the probability (given the values of the other bidders) that $v_1=t_y$ is $0$. 

We will show that $x_{i}^{M_{n-1}(v')}\cdot \left(\Pr\left[v'\right]\left(t_{y}-v'_{i}\right)\right)>x_{i}^{M_{n-1}(v')}\cdot \sum_{j=d-y+1}^{d-1}\Pr\left[\left(s^{-j},v'_{-i}\right)\right]\cdot v'_{i}$ (i.e., the additional revenue due to the mass shifted from bidder $i$ to bidder $1$ is higher than the revenue loss due to not allocating to bidder $i$ in $\{v'\}\cup G_i$). Notice that by individual rationality we use the value of bidder $i$ as an upper bound to his payment, and by monotonicity we have that $x_{i}^{M_{n-1}(v')}$ is an upper bound to the allocation probability of $M_{n-1}$ to bidder $i$ in every profile in $G_i$. 


Recall that by the definition of $\mathcal H_{n-1}$:
\begin{align*}
\Pr\left[\left(s^0,v'_{-i}\right)\right] & =\overset{\overline{q_{y}}}{\overbrace{\Pr\left[v_{1}=t_{y}|h\left((v'_2,\ldots)\right)=y\right]}}\cdot \Pr\left[(v'_2,\ldots, v'_{i-1},s^0,v'_{i+1},\ldots, v'_{n-1})\right]\\
 & =\overline{q_{y}}\cdot\Pr\left[s^0\right]\cdot\Pi_{t\in\left\{ 2,\dots,n-1\right\} \backslash \{i\}} \Pr\left[v'_{t}\right]
\end{align*}
Where the last equality is by the fact that the values of bidders $2,\dots,n-1$
are $\left(n-2\right)$-wise independent. Additionally, for every
$j\in\left\{ d-y+1,\dots,d-1\right\} $:
\begin{align*}
\mathbb{\Pr}\left[\left(v'_{1},\dots,s^{-j},\dots,v'_{n-1}\right)\right] & =\Pr\left[v_{1}=t_{y}|h\left(\left(v'_{2},\dots,s^{-j},\dots,v'_{n-1}\right)\right)>y\right]\cdot\Pr\left[\left(v'_{2},\dots,s^{-j},\dots,v'_{n-1}\right)\right]\\
 & =q_{y}\cdot\Pr\left[s^{-j}\right]\cdot\Pi_{t\in\left\{ 2,\dots,n-1\right\} \backslash \{i\}}\Pr\left[v'_{t}\right]
\end{align*}
Thus:
\begin{align}
 \Pr&\left[v'\right]\left(t_{y}-v'_{i}\right)-\sum_{j=d-y+1}^{d-1}\Pr\left[\left(s^{-j},v'_{-i}\right)\right]\cdot v'_{i}\nonumber \\
= & \Pi_{t\in\left\{ 2,\dots,n-1\right\} \backslash \{i\}}\Pr\left[v'_{t}\right]\cdot\left(\overline{q_{y}}\cdot\overset{\geq\left(1-\epsilon\right)\Pr\left[s^{-d}\right]}{\overbrace{\Pr\left[s^0\right]}}\cdot\left(t_{y}-\overset{<1}{\overbrace{v'_{i}}}\right)-\sum_{j=d-y+1}^{d-1}\overset{<1-\epsilon}{\overbrace{v'_{i}}}\cdot q_{y}\cdot\overset{\leq\Pr\left[s^{-d}\right]}{\overbrace{\Pr\left[s^{-j}\right]}}\right)\label{eq-*2}\\
> & \Pi_{t\in\left\{ 2,\dots,n-1\right\} \backslash \{i\}}\Pr\left[v'_{t}\right]\cdot\left(1-\epsilon\right)\cdot\Pr\left[s^{-d}\right]\left(\overline{q_{y}}\cdot\left(t_{y}-1\right)-\sum_{j=1}^{d-y-1}q_{y}\right)\label{eq-*3}\\
= & \Pi_{t\in\left\{ 2,\dots,n-1\right\} \backslash \{i\}}\Pr\left[v'_{t}\right]\cdot\left(1-\epsilon\right)\left[s^{-d}\right]\left(\overset{z}{\overbrace{\overline{q_{y}}\cdot t_{y}}}-\left(\overset{z}{\overbrace{\overline{q_{y}}+\left(d-y-1\right)q_{y}}}\right)\right)\nonumber \\
= & 0 \label{eq-*4}
\end{align}
(\ref{eq-*2})
is by Definition \ref{def:balanced}, as $\Pr\left[s^{0}\right]\geq\left(1-\epsilon\right)\Pr\left[s^{-d}\right]$
and $\Pr\left[s^{-j}\right]\leq\Pr\left[s^{-d}\right]$
because the value of bidder $i$ is distributed $1-2\epsilon+\epsilon\cdot \Sigma_{j=1}^{X_i}\frac{1} {2^j}$ and ${X_i}$ is $d$-balanced. (\ref{eq-*3}) is by the
fact that $1>\left(1-\epsilon\right)>v_{i}$. (\ref{eq-*4}) is
by Lemma \ref{lem:calcs}, specifically $\overline{q_{y}}+\left(d-y-1\right)q_{y}=z$.
This shows that the revenue of $M'$ is higher than that of $M_{n-1}$.

\subsubsection*{Acknowledgments}

The second author was partially supported by BSF grant no. 2016192.

\bibliographystyle{plain}
\bibliography{k-lookahead}

\begin{thebibliography}{1}

\bibitem{bulow1994auctions}
Jeremy Bulow and Paul Klemperer.
\newblock Auctions versus negotiations.
\newblock {\em American Economic Review}, 86(1):180--94, 1996.

\bibitem{Caragiannis:2016:LDA:2956681.2934309}
Ioannis Caragiannis, Christos Kaklamanis, and Maria Kyropoulou.
\newblock Limitations of deterministic auction design for correlated bidders.
\newblock {\em ACM Transactions on Computation Theory (TOCT)}, 8(4):13, 2016.

\bibitem{chen2011approximation}
Xue Chen, Guangda Hu, Pinyan Lu, and Lei Wang.
\newblock On the approximation ratio of k-lookahead auction.
\newblock In {\em Internet and Network Economics}, pages 61--71. Springer,
  2011.

\bibitem{cremer1988full}
Jacques Cremer and Richard~P McLean.
\newblock Full extraction of the surplus in bayesian and dominant strategy
  auctions.
\newblock {\em Econometrica: Journal of the Econometric Society}, pages
  1247--1257, 1988.

\bibitem{dobzinski2011optimal}
Shahar Dobzinski, Hu~Fu, and Robert~D Kleinberg.
\newblock Optimal auctions with correlated bidders are easy.
\newblock In {\em Proceedings of the forty-third annual ACM symposium on Theory
  of computing}, pages 129--138. ACM, 2011.

\bibitem{mehta2004randomized}
Aranyak Mehta and Vijay~V Vazirani.
\newblock Randomized truthful auctions of digital goods are randomizations over
  truthful auctions.
\newblock In {\em Proceedings of the 5th ACM conference on Electronic
  commerce}, pages 120--124. ACM, 2004.

\bibitem{myerson1981optimal}
Roger~B Myerson.
\newblock Optimal auction design.
\newblock {\em Mathematics of operations research}, 6(1):58--73, 1981.

\bibitem{papadimitriou2011optimal}
Christos~H Papadimitriou and George Pierrakos.
\newblock On optimal single-item auctions.
\newblock In {\em Proceedings of the forty-third annual ACM symposium on Theory
  of computing}, pages 119--128. ACM, 2011.

\bibitem{ronen2001approximating}
Amir Ronen.
\newblock On approximating optimal auctions.
\newblock In {\em Proceedings of the 3rd ACM conference on Electronic
  Commerce}, pages 11--17. ACM, 2001.

\end{thebibliography}

\appendix

\section{Missing Proofs}

\subsection{Existence of Balanced Distributions}
\begin{proposition}
	\label{prop:balancedexistance}For every $d\geq1$,$0<\epsilon<1$
	there exists an $\left(\epsilon,d\right)$-balanced distribution.
\end{proposition}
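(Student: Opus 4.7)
The plan is to write down an explicit distribution and verify it meets the two conditions of Definition \ref{def:balanced}. Intuitively the definition organizes the positive integers into consecutive blocks of length $d$, namely $\{1,\dots,d\},\{d+1,\dots,2d\},\ldots$, assigns a constant probability inside each block, and scales consecutive blocks by the factor $(1-\epsilon)$. So the distribution is essentially a geometric distribution over block indices combined with a uniform distribution within a block.

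Concretely, for every positive integer $i$ I would define
\[
\Pr[X^d_\epsilon = i] \;=\; \frac{\epsilon}{d}\,(1-\epsilon)^{\left\lfloor (i-1)/d\right\rfloor}.
\]
The first step is to check that this is a valid probability distribution: since the value $\lfloor (i-1)/d\rfloor$ equals $k$ for exactly $d$ consecutive integers ($i=kd+1,\dots,kd+d$), summing over all $i\geq 1$ gives
\[
\sum_{i\geq 1}\Pr[X^d_\epsilon = i] \;=\; \sum_{k\geq 0} d\cdot \frac{\epsilon}{d}\,(1-\epsilon)^k \;=\; \epsilon\cdot \frac{1}{\epsilon} \;=\; 1,
\]
using the geometric series and the fact that $0<1-\epsilon<1$.

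Next I would verify the two conditions. Fix $i\geq 1$ with $i\mod d = 1$, so $i = kd+1$ for some $k\geq 0$. For every $j\in\{0,1,\dots,d-1\}$, $\lfloor (i+j-1)/d\rfloor = k$, so $\Pr[X^d_\epsilon = i+j] = \frac{\epsilon}{d}(1-\epsilon)^k$ is independent of $j$, giving property (1). For property (2), $\lfloor (i+d-1)/d\rfloor = k+1$, hence $\Pr[X^d_\epsilon = i+d] = \frac{\epsilon}{d}(1-\epsilon)^{k+1} = (1-\epsilon)\Pr[X^d_\epsilon = i]$, as required.

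There is no real obstacle here: the only thing one has to be careful about is the indexing, making sure that the block boundaries line up with the condition $i\mod d = 1$ so that the floor function $\lfloor (i-1)/d\rfloor$ is constant exactly on each block $\{kd+1,\dots,kd+d\}$. The rest is an immediate computation.
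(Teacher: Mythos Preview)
Your proposal is correct and essentially identical to the paper's own proof: the paper defines $\Pr[X^d_\epsilon=k]=\frac{\epsilon(1-\epsilon)^{\lceil k/d\rceil-1}}{d}$, which coincides with your formula since $\lceil k/d\rceil-1=\lfloor (k-1)/d\rfloor$ for positive integers $k$, and then verifies the same three points (valid distribution, constancy within each block, $(1-\epsilon)$ scaling between blocks).
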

Let $X_{\epsilon}^{d}$ be the following random variable (for $0<\epsilon<1$ and $k\geq 1$): $\Pr\left[X_{\epsilon}^{d}=k\right]=\frac{\epsilon\left(1-\epsilon\right)^{\left\lceil \frac{k}{d}\right\rceil -1}}{d}$. Notice that for $d=1$ this is a regular Geometric distribution with $p=\epsilon$. We show that $X_{\epsilon}^{d}$ is an $\left(\epsilon,d\right)$ balanced distribution (as defined in Definition \ref{def:balanced}):
\begin{enumerate}
\item Clearly, the positive integers are the support.
\item $X_{\epsilon}^{d}$ is well defined. For every positive integer $k$, $0<\Pr\left[k\right]<1$, and $\sum_{j=1}^{\infty}\frac{\epsilon\left(1-\epsilon\right)^{\left\lceil \frac{j}{d}\right\rceil -1}}{d}=1$.
\item For $i=c\cdot d+1$ (where $c\in\left(\mathbb{N}\cup\left\{ 0\right\} \right)$,
i.e. $i\,\mod d=1$): 
\begin{enumerate}
\item For $j\in\left\{ 0,\ldots,d-1\right\}$ $\Pr\left[X_{\epsilon}^{d}=i+j\right]=\frac{\epsilon\left(1-\epsilon\right)^{\left\lceil \frac{c\cdot d+1+j}{d}\right\rceil -1}}{d}=\frac{\epsilon\left(1-\epsilon\right)^{c-1+\left\lceil \frac{1+j}{d}\right\rceil }}{d}\overset{1+j\leq d}{=}\frac{\epsilon\left(1-\epsilon\right)^{c}}{d}$.
\item $\Pr\left[X_{\epsilon}^{d}=i\right]=\frac{\epsilon\left(1-\epsilon\right)^{c}}{d}$
and $\Pr\left[X_{\epsilon}^{d}={i+d}\right]=\frac{\epsilon\left(1-\epsilon\right)^{c+1}}{d}=\frac{\epsilon\left(1-\epsilon\right)^{c}}{d}\left(1-\epsilon\right)=\Pr\left[X_{\epsilon}^{d}=i\right]\left(1-\epsilon\right)$.
\end{enumerate}
\end{enumerate}

\subsection{The Limit of $\sum_{j=\protect d-\left\lfloor \frac{\protect d}{z}\right\rfloor +1}^{\protect d}\frac{1}{j}$}
\begin{lemma}
\label{lem:limexp}$\lim_{d\rightarrow\infty}z+\left(z-1\right)\left(1-2\epsilon\right)\sum_{j=d-\left\lfloor \frac{d}{z}\right\rfloor +1}^{d}\frac{1}{j}=z+\left(z-1\right)\left(1-2\epsilon\right)$
\end{lemma}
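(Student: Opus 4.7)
The statement reduces, after pulling out the constants that do not depend on $d$, to showing that
\[
\lim_{d\to\infty}\ \sum_{j=d-\lfloor d/z\rfloor+1}^{d}\frac{1}{j}\ =\ 1,
\]
and then multiplying by $(z-1)(1-2\epsilon)$ and adding $z$. So the plan is to focus entirely on the asymptotics of this truncated harmonic sum and, at the end, plug in $z=e/(e-1)$.

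The approach is the standard harmonic-sum estimate. Writing $H_n=\sum_{k=1}^n 1/k$, the sum equals $H_d-H_{d-\lfloor d/z\rfloor}$. Using the classical expansion $H_n=\ln n+\gamma+O(1/n)$, I would rewrite
\[
\sum_{j=d-\lfloor d/z\rfloor+1}^{d}\frac{1}{j}\ =\ \ln\!\left(\frac{d}{d-\lfloor d/z\rfloor}\right)+O\!\left(\frac{1}{d-\lfloor d/z\rfloor}\right).
\]
The error term is $O(1/d)$ since $\lfloor d/z\rfloor<d$ with $d-\lfloor d/z\rfloor\ge d(1-1/z)-1\to\infty$, so it vanishes as $d\to\infty$.

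Next I would evaluate the main term. Since $\lfloor d/z\rfloor/d\to 1/z$, the fraction inside the logarithm satisfies
\[
\frac{d}{d-\lfloor d/z\rfloor}\ =\ \frac{1}{1-\lfloor d/z\rfloor/d}\ \longrightarrow\ \frac{1}{1-1/z}\ =\ \frac{z}{z-1}.
\]
Substituting $z=\tfrac{e}{e-1}$ gives $z/(z-1)=e$, so the limit of the main term is $\ln e=1$.

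Combining these two observations, the truncated harmonic sum tends to $1$, and therefore
\[
\lim_{d\to\infty}\Bigl[z+(z-1)(1-2\epsilon)\sum_{j=d-\lfloor d/z\rfloor+1}^{d}\frac{1}{j}\Bigr]=z+(z-1)(1-2\epsilon),
\]
as claimed. I do not anticipate any real obstacle: the only thing to be slightly careful about is controlling the floor $\lfloor d/z\rfloor$ (so that the lower endpoint of the summation is a positive integer bounded away from $d$ for large $d$), which is handled by the simple inequality $d/z-1\le\lfloor d/z\rfloor\le d/z$ and the continuity of $\ln$.
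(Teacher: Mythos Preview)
Your argument is correct and essentially the same as the paper's: both reduce to showing the truncated harmonic sum tends to $\ln\bigl(z/(z-1)\bigr)=\ln e=1$. The only cosmetic difference is that you invoke the expansion $H_n=\ln n+\gamma+O(1/n)$, whereas the paper sandwiches the sum directly between two integrals of $1/j$ and applies the Squeeze Theorem; both routes arrive at the same limit $\frac{d}{d-\lfloor d/z\rfloor}\to\frac{z}{z-1}$ and the same final evaluation.
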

\noindent Recall that for general $a$ and $b$:

\noindent 
\[
\ln\left(\frac{a+1}{b}\right)=\int_{b}^{a+1}\frac{1}{j}dj\leq\sum_{b}^{a}\frac{1}{j}\leq\int_{j=b+1}^{a}=\ln\left(\frac{a}{b+1}\right)
\]
In our case: $a=d$ and $b=d-\left\lfloor \frac{d}{z}\right\rfloor +1$.
On one hand, as $\left(\frac{a+1}{b}\right)=\frac{d+1}{d-\left\lfloor \frac{d}{z}\right\rfloor +1}$:
\[
\lim\frac{d+1}{d-\left\lfloor \frac{d}{z}\right\rfloor }=\lim\frac{d+1}{d-\frac{d}{z}+1}=\lim\frac{zd+z}{\left(z-1\right)d+z}=\frac{z}{z-1}
\]
On the other hand, as $\frac{a}{b+1}=\frac{d+1}{d-\left\lfloor \frac{d}{z}\right\rfloor +2}$:
\[
\lim\frac{d+1}{d-\left\lfloor \frac{d}{z}\right\rfloor +2}=\lim\frac{zd+z}{d\left(z-1\right)+2z}=\frac{z}{z-1}
\]
Therefore, by the Squeeze Theorem:
\[
\lim_{d\rightarrow\infty}z+\left(z-1\right)\left(1-2\epsilon\right)\sum_{j=d-\left\lfloor \frac{d}{z}\right\rfloor +1}^{d}\frac{1}{j}=z+\left(z-1\right)\left(1-2\epsilon\right)\log\left(\frac{z}{z-1}\right)=z+\left(z-1\right)\left(1-2\epsilon\right)
\]
where the last equality follows as $\log\left(\frac{z}{z-1}\right)=\log\left(\frac{\frac{e}{e-1}}{\frac{e}{e-1}-1}\right)=1$.
\end{document}